\documentclass[preprint]{imsart}
\newcommand\independent{\protect\mathpalette{\protect\independenT}{\perp}}
\def\independenT#1#2{\mathrel{\rlap{$#1#2$}\mkern3mu{#1#2}}}

\usepackage[figuresright]{rotating}
\usepackage{amsmath}
\usepackage{amssymb}
\usepackage{bbm}
\usepackage{subfigure}
\usepackage{pstricks,pst-node,pst-plot}
\usepackage{color}
\usepackage{pst-all}
\usepackage{pspicture}
\usepackage{booktabs}
\usepackage{wasysym}
\usepackage[all]{xy}
\usepackage{epsfig,graphicx}
\usepackage{theorem}
\usepackage{natbib}
\usepackage{tikz}
\usetikzlibrary{positioning,decorations.markings,arrows,automata}
\usepackage{algorithm}
\usepackage{algorithmic}
\usepackage{multirow}
\usepackage{hyperref}

\newcounter{count}
\newcounter{asscount}

\newenvironment{lemma}[1][Lemma \arabic{count}]{\vspace{1em}\refstepcounter{count}\begin{trivlist}
\item[\hskip \labelsep {\bfseries #1}]\em}{\end{trivlist}\vspace{1em}}

\newenvironment{proof}[1][Proof]{\vspace{1em}\begin{trivlist}
\item[\hskip \labelsep {\bfseries #1}]}{\hfill$\Box$\end{trivlist}\vspace{1em}}

\newenvironment{definition}[1][Definition \arabic{count}]{\vspace{1em}\refstepcounter{count}\begin{trivlist}
\item[\hskip \labelsep {\bfseries #1}]}{\end{trivlist}\vspace{1em}}

\newenvironment{remark}[1][Remark]{\vspace{1em}\refstepcounter{count}\begin{trivlist}
\item[\hskip \labelsep {\bfseries #1}]}{\end{trivlist}\vspace{1em}}

\newenvironment{assumption}[1][Assumption \arabic{asscount}]{\vspace{1em}\refstepcounter{asscount}\begin{trivlist}
\item[\hskip \labelsep {\bfseries #1}]}{\end{trivlist}\vspace{1em}}

\begin{document}
%---------------------------------------------------
\begin{frontmatter}

\title{Towards personalised intervention:\\ A causal-dynamical framework to determine psychological treatment trajectories}

\runtitle{Simulating interventions}
\author{Lourens Waldorp$^1$ }\ead{waldorp@uva.nl}
\author{Titus M\"{u}rtz$^1$} \author{Anita Jansen$^2$}
\author{Jonas Haslbeck$^{1,2}$}\ead{haslbeck@protonmail.com}
\address{$^1$ University of Amsterdam, Nieuwe Achtergracht 129-B, 1001NK, the Netherlands}
\address{$^2$ Maastricht University, Universiteitssingel 40, 6229 ER, the Netherlands}
%\ead[label=e1]{waldorp@uva.nl}
\runauthor{Waldorp et al.}
%\texttt{\em waldorp@uva.nl}\hspace{2em}\texttt{\em jonashaslbeck@protonmail.com}\\

%\printead{e1}

\begin{abstract}
For approximately half of the individuals receiving mental health care, the results are suboptimal, even when treatments align with evidence-based guidelines. These limited effects may partly stem from how clinical decisions on treatment focus are made in mental health care. Typically, treatment strategy is guided by the diagnostic classification combined with the individualized case conceptualization. While standard, this approach may fall short for several reasons such as biases on the part of both the patient and therapist, and treatment guidelines being based on average effects that may not (exactly) suit the individual patient. To address these challenges, we propose a novel framework that reduces biases in clinical decision-making and makes it genuinely possible to tailor treatment focus to the individual patient. This framework involves (a) constructing causal graphs and estimating causal effects from intensively collected, longitudinal patient data, (b) simulating new time series based upon the causal relationships, and (c) using these simulations to identify the most effective treatment focus for the individual patient. By simulating and comparing different intervention strategies and examining both the estimated individual’s responsiveness and its long-term effectiveness, this approach may generate useful insights to guide treatment focus and strategy, which can lead to a significant improvement of treatment outcomes in mental health care.  
\end{abstract}

\begin{keyword}
\kwd{feedback loops}
\kwd{intervention}
\kwd{Markov process} 
\kwd{causal graph} 
\kwd{transition probabilities}
\end{keyword}

\end{frontmatter}
%-------------------------------------------------------

%---------------------------------------------------------------------------
\section{Introduction}
Despite the wide range of evidence-based interventions, outcomes of treatments in mental health care remain suboptimal for a substantial proportion of patients \citep{Leichsenring:2022}. Less than half of the individuals receiving treatment achieve sufficient improvement, even when interventions are delivered in accordance with the established evidence-based clinical guidelines \citep{Cuijpers:2024,Cuijpers:2026,Hoffmann:2012}. This could indicate a significant limitation of current approaches to selecting the appropriate treatment focus. 
In everyday practice, clinical decision-making regarding the appropriate treatment focus is typically guided by a combination of diagnostic classification and individualized case conceptualization \citep{Eells:2025}. Diagnoses, often based on systems such as the diagnostic statistical manual 5 (DSM-5), are linked to evidence-based treatment recommendations derived from group-level research. These recommendations are then refined through the individual case conceptualization, in which clinicians integrate information about the patient’s history, symptoms, and context to tailor the evidence-based treatment. Case formulations are a combination of scientific knowledge, clinical expertise, and lived experience – they are considered to be an essential therapeutic component to guide effective treatment \citep{Eells:2025}. 

Although this approach of clinical decision making is widely accepted and grounded in scientific evidence, it has notable limitations. First, treatment guidelines are based on average effects observed in heterogeneous populations and may not adequately capture individual variability in treatment response. Secondly, case formulations—while valuable—are complex, inherently subjective, and likely to be biased, which can lead to poor decision-making \citep{Lutz:2025}. A clinician might apply heuristics, which tap into information about which treatments seemed to work for most patients. Additionally, it is hard to identify feedback relations, which are considered the main drivers in the persistence of mental illness \citep{Hayes:2020,Evers:2026}. Both the diagnosis and the case formulation often present a relatively static picture of the psychopathology, meaning that they offer only limited insight into the day-to-day dynamics of the symptoms. 
To address these limitations of current clinical decision-making, there is growing interest in more personalized, data-driven approaches to diagnosis and the associated selection of treatment strategies. A promising approach involves the use of extensive longitudinal data, collected through repeated measurements in patients’ daily lives, to gain a better understanding of the dynamics of individual symptoms. Using such data, connections can be established between symptoms, feelings, behavior, beliefs, and contextual factors over time \citep{Roefs:2022}. In this way, we move beyond static and potentially biased descriptions, to arrive at a more data-driven and process-oriented understanding of mental disorders which may improve clinical decision-making on treatment focus. 
The use of more formal data-driven methods for inference about the complex multicomponent nature of the individual’s psychopathology, might improve the response rates of therapy \citep{Hofmann:2019,Hitchcock:2022}. There is some evidence to suggest that data-driven personalisation of treatment trajectories is beneficial \citep{Scholten:2022,Delgadillo:2022,Hofmann:2019}. 
However, even if such information were available, currently, there is no framework to properly test, prior to treatment, any hypotheses about which treatment focus might work best or is most efficient. 

Here, we propose a framework and a method for improving clinical decision making regarding the most appropriate treatment strategy for the individual patient.  
By modelling observed variables as a causal network and then performing interventions on a dynamical version of that network in simulations,
we can test which of the intervention strategies is predicted to be the most effective. Our framework is akin to those in weather prediction \citep{Leutbecher:2008,Palmer:2019}, where simulations are used to determine what could happen next in what-if situations, and our framework is also similar to those in physics \citep{Sethna2004}, biology \citep{Li:2022} and behavioural sciences
\citep{Nishi:2020}, where simulations are used to test the effects of interventions. 
These methods, simulations to determine which intervention is optimal, have led to great advances in weather prediction  \citep{Allen:2025} and in determining climate policy \citep{Hazeleger:2024}. Our proposal differs with respect to existing efforts in that our framework:
($i$) It uses causal information that leads to stable inference \citep{Pearl:2000,Peters:2017,Chauhan:2025}, and  the interpretation of interventions matches knowledge from experiments \citep{Woodward:2003,Gillies:2019,Mooij:2020a}, ($ii$) effectively uses feedback relations that appear crucial in the persistence of symptoms \citep{Wittenborn:2016} and learning about effective interventions \citep{Hayes:2020,Hofmann:2019}, and ($iii$) invokes non-linear modelling to emulate characteristic features of real life processes  \citep{Hayes:1998,Schiepek:2019,Waldorp:2020}. We explain each of these advantages of our frameworek. 

($i$) 
%Causal relations are at the heart of many sciences \citep{Gillies:2019,Glymour:2019} and are beginning to play a larger role in machine learning. 
While data-driven methods like computational psychiatry \citep{Teufel:2016} and AI/machine learning aim to improve treatments in mental health \citep{Lunansky:2022,Cui:2023,Fischer:2025,Ryan:2025c}, they seem unable to obtain an increase in treatment response rates \citep{Chauhan:2025}. Current AI and machine learning apply correlational methods \citep[][except, e.g., \citet{Ryan:2025b}]{Garg:2022,Ntekouli:2024,Zhang:2025}, which might be able to predict well, but fail to evaluate the effects of interventions \citep{Chauhan:2025}. For example, the conclusion that treatment leads to lower remission rates, is likely to be confounded by severity; severity leading both to individuals seeking treatment and to lower remission rates (i.e., $T\leftarrow S\to R$ and $T\to R$). Such scenarios reveal that it is crucial to determine the causal structure, which leads to stable (reliable) inference \citep{Spirtes:1993,Pearl:2000,Hitchcock:2022,Chauhan:2025}.  Causal information is important because: ($a$) how variables interact may provide an explanation about why problems persist. For example, someone with social anxiety might start avoiding social situations, which exacerbates problems at work and increases their anxiety, causing their social anxiety to worsen even further (i.e., $A\to E \to M\to A$). Learning about such a causal process may benefit treatment, which is also the basic principle of cognitive behavioural theorapy (CBT) models. The main difference is that the framework we propose draws on empirical data collected from individual patients, whereas (CBT-based) case conceptualisations are primarily derived from clinical interviews and clinical reasoning. And ($b$) the causal framework encompasses \textit{what if} questions, e.g., in the above scenario, if we reduce avoidance,  this will lead to more social interactions.   Such processes are crucial to both the case conceptualization and determination of the treatment focus. 

($ii$) Data-driven methods do not necessarily consider the feedback loops that are crucial \citep{Cui:2023} to understand the mechanisms that drive the persistence of mental illness, although such mechanisms are not necessarily the cause \citep{Hayes:2015b, Scheffer:2024b,Scheffer:2024}. In the example above, $A\to E \to M\to A$, intervening on avoidance may be achievable, while intervening on anxiety may be more problematic. This shows that knowing these feedback relations is critical for selecting optimal interventions. Although it is not easy to obtain feedback relations \citep{Park:2024, Waldorp:2024b}, this can be improved by obtaining repeated measures or time series \citep{Hyttinen:2016}. We therefore include ecological momentary assessment data \citep{Shiffman:2008} to obtain causal effect estimates. As an example, we found evidence of a feedback relation in real data of a patient: {\em anxiety $\leftrightarrow$ stressed $\to$ sad $\to$ guilty $\to$ anxiety} (see Figure \ref{fig:causal-effects-p1}).

($iii$) Non-linear modelling refers to concepts such as multiple stable states \citep[e.g., ill and healthy states,][]{Leemput:2014,Waldorp:2020} and tipping points \citep[i.e., points where a small perturbation will lead to a different stable state,][]{Scheffer:2001}, but also sudden transitions from one stable state to another, and hysteresis \citep[e.g., getting back to as it was before the (mental) illness is more difficult than getting (mentally) ill,][]{Cramer:2016,Maas:2020}. Because it appears that such aspects are important parts of psychopathology \citep{Hayes:1998,Hosenfeld:2015,Cramer:2016,Hayes:2020,Hofmann:2025},  it seems imperative to take such phenomena into account in modelling \citep{Hofmann:2019}. 
%Another type of feedback is between the clinician and patient, which is imperative to understand because treatment is an interactive process \citep{Shapiro:2015}.

We start in Section  \ref{sec:simulating-testing} by giving a conceptual overview of our proposed framework, in which we explain how to get from time series data to causal effects estimates, then use non-linear modelling to simulate new time series, and finally determine intervention foci and evaluate the intervention foci. Then in Sections \ref{sec:causal-graph} to  \ref{sec:interventions-comparisons} we give details and illustrations on each of these steps (some technical details have been deferred to the Appendix). Finally in Section \ref{sec:discussion-conclusion} we describe how the framework could be extended in multiple ways, to do justice to the complexity and nuances of real patients. 

\section{Simulating and testing psychological treatment trajectories}\label{sec:simulating-testing}

The framework involves several different steps to get from conceptual ideas of influence between relevant variables to more numerical and predictive versions for comparisons of intervention starting points. Our framework consists of the following steps, which are described below:
\begin{itemize}
    \item[($a$)] obtain a causal network and causal effects, 
    \item[($b$)] transform to nonlinear dynamic Markov process, and
    \item[($c$)] apply and compare starting points of interventions to decide on treatment strategy.
\end{itemize}
An overview of these three steps is provided in Figure \ref{fig:flow-diagram}. In what follows, we make particular choices for ($a$)-($c$) so we can illustrate our framework. 
%We do not think that those choices are optimal for all situations. Instead, our focus is to present a general framework and we hope that the specific choices can be improved and adapted to various context in future research.

%
\begin{figure}[H]\centering
    \pgfimage[width=\textwidth]{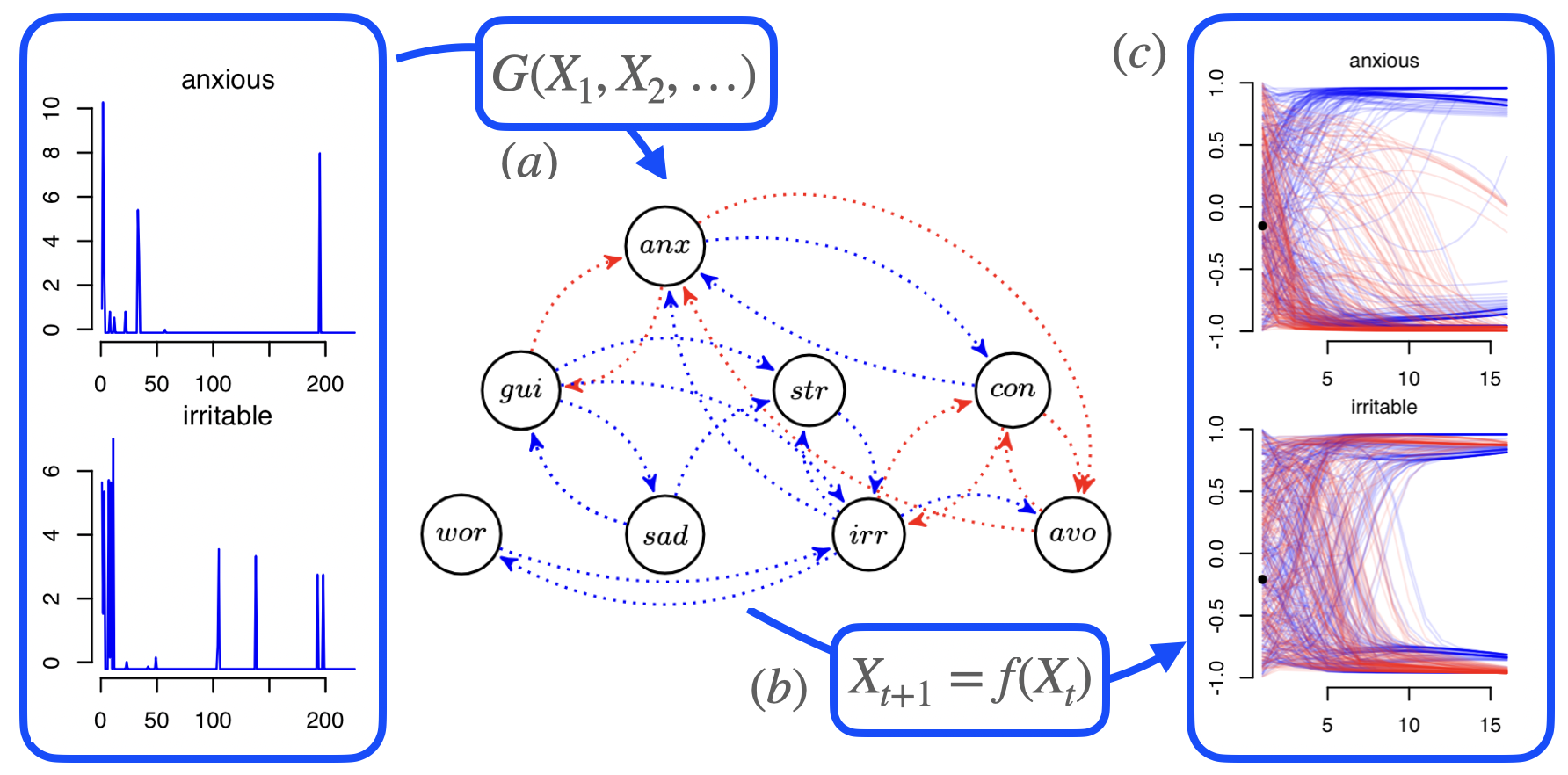}
\caption{Flow dagram of the proposed framework to get from data to possible starting points of interventions. The left panel shows a realisation of EMA data of a real patient with a mental disorder for two variables, \textit{anxious} and \textit{irritable}.  In (a) we take the EMA data and make a causal effects graph based on the causal graph. Then in (b) we use the EMA data to obtain the causal effects of this particular patient to make a non-linear Markov process to simulate possible trajectories for a control (blue) and intervention (red) condition. In (c) are the simulated temporal processes for the control and intervention conditions, which can be compared to determine the largest effect over time. Boxed material signifies it is possible to exchange this part for different data or a different procedure.  }
\label{fig:flow-diagram}
\end{figure}

\textit{Ad} ($a$). Our objective is to answer \textit{what if} questions about a possible starting point of treatment given specific information and context about an individual. To achieve this, we require a causal model, a causal network or graph of relevant variables, and temporal information will help in estimating the causal model. Typically, we will use ecological momentary assessment (EMA, see Figure \ref{fig:flow-diagram}), where several times a day a few questions are answered about emotions, behaviours, cognitions, and contexts (e.g., social environment) to capture the current state of an individual in their natural environment \citep{Shiffman:2008,Kossakowski:2019, Schiepek:2022}. This provides a wealth of data measured outside the treatment sessions of the patient and can be obtained before the patient starts with the actual treatment. Although perhaps not necessary, these data can be augmented (or replaced) by patient-clinician concepts of the issues at hand \citep[e.g.,][]{Hofmann:2025}, or perceived causal networks, where patient and/or clinician make a causal network of the relevant variables \citep{Frewen:2012,Burger:2024}.
The data can be used to obtain a causal model and measures of causal effects \citep[($a$) in Figure \ref{fig:flow-diagram},][]{Maathuis:2009}. A causal model contains nodes (variables) and their connections, and causal effects provide estimates of the impact of an intervention of one or several nodes on all other nodes \citep{Nandy:2017,Waldorp:2024b}. Obtaining measurements before and during treatment will improve our causal models and estimates of feedback relations \citep{Waldorp:2024b}. 

\textit{Ad} ($b$). We can use these causal models as models for a patient and make predictions about the response to different starting points of treatments. A dynamical description is obtained by using an Ising mean field function (hyperbolic tangent) with coupling between the variables derived from the causal model \citep[Figure \ref{fig:flow-diagram}($(b)$),][]{Maas:2020,Kohler:2021}. In the current version, there are two attractor states for each of the variables, such that each variable ends up at low or high values \citep[][]{Brunton:2016,Ntekouli:2024}. Such a model aligns well with some of the well-known phenomena that may or may not be present in psychopathology, such as bistability \citep{Hosenfeld:2015,Kossakowski:2019b}, hysteresis \citep{Cramer:2016}, and tipping points \citep{Leemput:2014,Scheffer:2024}. It is not our intention to suggest that non-linear phenomena are always present, just that a model should in principle be able to reflect such non-linearities, should they be present. In general, change, leading to bistability etc, will be useful to make a case formulation in terms of a landscape of the intricate interplay of emotions, cognitions, behaviours and contexts \citep{Schiepek:2019,Hayes:2020,Moulder:2021}, which may indicate the focus of the treatment \citep{Tschacher:2019,Hofmann:2025}.
%Of course, other models could be equally valid, like the Blume-Capel mean field model which has three stable states \citep{Maas:2026,Finnemann:2026}, and this should be tested. 
%Such temporal processes can be obtained in different ways, and, in fact, I will use this fact to my advantage by considering converging evidence across different versions of the processes \citep{Palmer:2023}. 

\textit{Ad} ($c$). In our framework we determine which intervention seems most effective and efficient \citep{Heinze:2018b}. This raises the question how interventions are acting on the dynamic model we use to approximate the system representing the disorder at hand. How to do this is not obvious.  Interventions can be conceptualized as perturbations on nodes, edges \citep{Valente:2012}, or other features of the model \citep[e.g., disturbance parameters,][]{Peters:2016}, in many different ways. For instance, we can imitate a stress reduction process by repeatedly acting on certain nodes (e.g., reducing anxiousness) or edges (e.g., reducing the connection between anxiousness and avoidance) in a causal model (see Figure \ref{fig:flow-diagram}(c)). An intervention can be performed in different ways, single or multiple variables simultaneously, including changes in connections (see Section \ref{sec:interventions-comparisons}).
To come to a possible treatment trajectory, we compare the different focus points of interventions in terms of how and when they reduce the levels on all variables simultaneously compared to a control where no intervention is applied (see Figure \ref{fig:flow-diagram}($c$)). This choice of measure of success is based on the idea that many different factors define whether a treatment is successful \citep{Schiepek:2017} and current practice \citep{Hayes:2020}.

\section{Causal graph and effects}\label{sec:causal-graph}
We introduce some concepts of a causal graph and describe how to relate the causal structure from a graph to a probability distribution, which can be estimated from data. Then we use the obtained causal graph to estimate causal effects, which reflect intervention effects. We remain informal here and defer more technical details to Appendices \ref{app:causal-basics} and \ref{app:assumptions-causal-discovery} and the cited references. 

The objective of causal inference is to obtain estimates of how the variables affect each other. To this end, let $\mathcal{G}$ be a graph with nodes (vertices) $V=\{1,2,\ldots,p\}$ and edges $E$ between the nodes.  We use two different types of edges: a directed edge $i\to j$ means that $i$ causes $j$, and a bidirected edge $i\leftrightarrow j$ means that there is uncertainty about the causal relation, which might be feedback or there might be a confound. 
We do not know the causal relations between nodes, but we can learn them from data through conditional (in)dependence relations for the associated random variables $X_i$ for all nodes. If we find that $X_i$ is conditionally independent of $X_j$ given $X_k$, then we conclude that there is no edge between $i$ and $j$. By considering many conditional independence relations, it is possible to determine how variables affect one another \citep{Spirtes:1993,Pearl:2000}. However, we will not be able to learn all connections or directions, and so we end up with a mixed graph of edges $i\to j$ or $i\leftrightarrow j$, where the latter suggests that from data alone, it cannot be determined which direction the edge should have or whether there might be a latent confound.  The algorithm we have used for this is fast causal inference \citep[FCI,][]{Spirtes:1996,Kalisch:2007}.

Figure \ref{fig:network-p1} shows the result of FCI applied to patient EMA data at zero lag (see Appendix \ref{app:causal-basics} and \ref{app:analyses-R} for details and an example assuming lag one relations over time). Figure \ref{fig:network-p1} shows that there might be evidence for feedback (cycles): $\textit{anxiety} \leftrightarrow \textit{stressed} \leftrightarrow \textit{worry}$. Such cycles could be relevant to determine intervention foci \citep{Hayes:2020}. 
\begin{figure}[t]
\begin{tabular}{c @{\hspace{5em}} p{0.4\textwidth}}
\raisebox{-10em}{
\begin{tikzpicture}[-, >=stealth',shorten >=1pt,auto,node distance=2cm,
  thick,main node/.style={circle,fill=blue!10,draw,font=\footnotesize\sffamily}]
  \tikzstyle{sample} = [circle,fill=blue!20,draw,font=\footnotesize\sffamily]
  \tikzstyle{noSample} = [circle,fill=red!0,draw,font=\footnotesize\sffamily]
  \tikzstyle{sampleEdge} = [font=\sffamily\small,blue]
  \tikzstyle{noSampleEdge} = [node/.style={font=\sffamily\small},red]	
	
  \node[noSample] (anx) [] {$anx$};
  \node[noSample] (gui) [below left of=anx] {$gui$};
  \node[noSample] (str) [below right of=anx] {$str$};
  \node[noSample] (con) [right of=str] {$con$};
  \node[noSample] (sad) [below right of=gui] {$sad$};
  \node[noSample] (irr) [right of=sad] {$irr$};
  \node[noSample] (avo) [right of=irr] {$avo$};  
  \node[noSample] (wor) [left of=sad] {$wor$};

 \path[sampleEdge,<->]
    (anx) edge [right] node [xshift=0.2em] {} (str)  %above, yshift=0em, xshift=0.4em
    (str) edge [right] node [above] {} (wor)
    (wor) edge [right] node [] {} (sad)
    (wor) edge [bend right=20] node [below] {} (anx)
    (sad) edge [right] node [below] {} (irr)
    (sad) edge [bend right=20] node [] {} (avo)
    (sad) edge [right] node [left, yshift=0.5em, xshift=0.3em] {} (con)
    (anx) edge [right] node [above] {} (gui);

\end{tikzpicture}
}
&
\vspace{-3em}
\begin{itemize}
\item[$avo =$] avoidance 
\item[$wor =$] worry 
\item[$anx =$] anxious 
\item[$str =$] stressed 
\item[$con =$] concentration
\item[$gui =$] guilty
\item[$sad =$] sad
\item[$irr =$] irritability
\end{itemize}

\end{tabular}
\caption{Graph of the eight variables; result of applying FCI to the time series data using lag zero. A directed edge means there is evidence for a causal relation (there are none here), and a bidirected edge, where it is unclear what the relation is. }
\label{fig:network-p1}
\end{figure}

\subsection{Causal effects}

The causal graph provides a suggestion of possible effects, but it is hard to ``read-off'' the causal effects, the impact of an intervention on one of the nodes on all other nodes, for example. It is more informative to learn about the impact of a change in one variable on all other variables, i.e., the effect of a do-intervention of one node on any of the other nodes \citep{Pearl:2000}. In a do-intervention on node $i$, all of the incoming arrows to node $i$ are removed, so that full control is exerted over node $i$; the value of $X_i$ is determined only by the intervention \citep[][see Appendix \ref{app:assumptions-causal-discovery}]{Spirtes:2000}. For example, if in the graph $\textit{worry}\to\textit{anxious}\to\textit{guilty}$ we were to apply a do-intervention on \textit{anxious} we remove the effect $\textit{worry}\to \textit{anxious}$, but $\textit{anxious}\to \textit{guilty}$ remains. The distribution of the variables is then changed according to the do-intervention by excluding the incoming arrow on \textit{anxious} \citep[][and see Appendix \ref{app:assumptions-causal-discovery}]{Lauritzen:2002,Eberhardt:2007}. In the Gaussian case, when comparing two do-interventions on node $k$, we obtain the average causal effect 
\begin{align}\label{eq:average-causal-effect}
\mathbb{E}(X_i\mid \text{do}(X_k=x_k +1))-\mathbb{E}(X_i\mid \text{do}(X_k=x_k)),
\end{align}
which does not depend on the value $x_k$. Note that we marginalise over the other variables (covariate adjustment), so the average effect is determined over all the paths from $k$ to $i$ in (\ref{eq:average-causal-effect}). 
%See Appendix \ref{app:regression-causality} for an illustration of the difference between regression and causality.

In the absence of a definitive causal graph, such causal effects must be considered with respect to some uncertainty about the directions of the arrows. The estimated graph with FCI represents such uncertainty by referring to a set of graphs that comply with the output of FCI \citep{Richardson:2002,Mooij:2020a}, i.e. the FCI output is an equivalence class with many different versions of a causal graph. Hence,  \citet{Maathuis:2009} propose to estimate the causal effect on all possible versions in the Markov equivalent class. Subsequently, for each of the causal models, the causal effect is computed (regression coefficients but conditioned only on the parents, see Appendix \ref{app:regression-causality}), and the set of possible values provides a range of values that take into account the uncertainty we have with respect to all equivalent models in the same Markov class. We choose the minimal value for the causal effects here because it is a lower bound of the causal effect, but other possibilities (like the  average) might also be useful. Figure \ref{fig:causal-effects-p1} shows these causal effects, using dotted arrows, for the data of a patient for the causal effects that are at least of absolute size 0.05 (the causal effect graph of five patients are shown in Appendix \ref{app:results-5patients}). 

\begin{figure}[t]
\begin{tabular}{c @{\hspace{5em}} p{0.4\textwidth}}
\raisebox{-10em}{
\begin{tikzpicture}[-, >=stealth',shorten >=1pt,auto,node distance=2cm,
  thick,main node/.style={circle,fill=blue!10,draw,font=\footnotesize\sffamily}]
  \tikzstyle{sample} = [circle,fill=blue!20,draw,font=\footnotesize\sffamily]
  \tikzstyle{noSample} = [circle,fill=red!0,draw,font=\footnotesize\sffamily]
  \tikzstyle{sampleEdge} = [font=\sffamily\small,blue]
  \tikzstyle{noSampleEdge} = [node/.style={font=\sffamily\small},red]	
	
  \node[noSample] (anx) [] {$anx$};
  \node[noSample] (gui) [below left of=anx] {$gui$};
  \node[noSample] (str) [below right of=anx] {$str$};
  \node[noSample] (con) [right of=str] {$con$};
  \node[noSample] (sad) [below right of=gui] {$sad$};
  \node[noSample] (irr) [right of=sad] {$irr$};
  \node[noSample] (avo) [right of=irr] {$avo$};  
  \node[noSample] (wor) [left of=sad] {$wor$};

\path[sampleEdge,->,dotted,red]
 (avo) edge [bend left=30] node [xshift=0.2em] {} (anx)
 (avo) edge [bend left=30] node [xshift=0.2em] {} (con)
 (anx) edge [bend left=60] node [xshift=0.2em] {} (avo)
 (anx) edge [bend left=40] node [xshift=0.2em] {} (gui)
 (con) edge [bend left=30] node [xshift=0.2em] {} (avo)
 (con) edge [bend left=30] node [xshift=0.2em] {} (irr)
 (gui) edge [bend left=30] node [xshift=0.2em] {} (anx)
 (irr) edge [bend left=30] node [xshift=0.2em] {} (con);
 %above, yshift=0em, xshift=0.4em

 \path[sampleEdge,->,dotted]
 (sad) edge [bend left] node [xshift=0.2em] {} (gui)
 (anx) edge [bend left] node [xshift=0.2em] {} (con)
 (sad) edge [bend left] node [xshift=0.2em] {} (str)
 (sad) edge [bend left] node [xshift=0.2em] {} (gui)
 (wor) edge [bend right=20] node [xshift=0.2em] {} (irr)
 (str) edge [bend left=30] node [xshift=0.2em] {} (irr)
 (con) edge [bend left=15] node [xshift=0.2em] {} (anx)
 (gui) edge [bend left=30] node [xshift=0.2em] {} (sad)
 (gui) edge [bend left=30] node [xshift=0.2em] {} (str)
 (gui) edge [bend left=30] node [xshift=0.2em] {} (irr)
 (irr) edge [bend left=30] node [xshift=0.2em] {} (wor)
 (irr) edge [bend left=30] node [xshift=0.2em] {} (avo)
 (irr) edge [bend left=30] node [xshift=0.2em] {} (anx)
 (irr) edge [bend left=30] node [xshift=0.2em] {} (str)
 ;
 %above, yshift=0em, xshift=0.4em
    
\end{tikzpicture}
}
&
\vspace{-3em}
\begin{itemize}
\item[$avo =$] avoidance 
\item[$wor =$] worry 
\item[$anx =$] anxious 
\item[$str =$] stressed 
\item[$con =$] concentration
\item[$gui =$] guilty
\item[$sad =$] sad
\item[$irr =$] irritability
\end{itemize}

\end{tabular}
\caption{Graph of the eight variables and the (total) causal effects represented by the dotted arrows, estimated by the \texttt{\em ida} package \citep{Maathuis:2009}. Blue arrows refer to positive effects and red arrows refer to negative effects. Only the causal effects $z$ with $|z|\ge 0.05$ are shown. }
\label{fig:causal-effects-p1}
\end{figure}

Both positive (blue) and negative (red) causal effects are observed. An arrow implies that in at least all Markov equivalent graphs, this arrow is present. A causal effect is then obtained from the smallest estimated average causal effect, obtained using (\ref{eq:average-causal-effect}), across all possible graphs in the Markov equivalence class. Note that in the graph of causal effects (Figure \ref{fig:causal-effects-p1}) there are many feedback relations even though these are not visible in the causal graph (Figure \ref{fig:network-p1}).

The causal effects, here obtained from the FCI graph, are collected in the $p\times p$ kernel matrix $K_{\mathcal{G}}$. This kernel with causal information, however obtained, can be used to transform to a Markov process which may reveal interesting patterns over time. 

%---------------------------------------------------------------------------
\section{Non-linear dynamic Markov process}\label{sec:nonlinear-dynamic-markov-process}

To gain insight into possible trajectories, we use a non-linear Markov process which consists of an intrinsic (to a node) part that is bistable, and a coupling between the variables, induced by the causal effects previously obtained, that shifts the bistable process. 

\subsection{From causal effects to Markov process}

In the dynamics, each node is related to itself and nodes are coupled to other nodes by the causal effects, which is based on causal graph $\mathcal{G}$. 
In order to obtain a Markov process, the causal effects in kernel $K_\mathcal{G}$ obtained from the causal graph, are transformed into a Markov transition matrix. 
Kernel $K_\mathcal{G}$ can be related to a Markov process by \citep{Coifman:2006}
\begin{align}
    P_\mathcal{G}(i,j)=\frac{K_\mathcal{G}(i,j)}{d_\mathcal{G}(i)}
\end{align}
where $d_\mathcal{G}(i)$ is the degree (sum of weights) of $K_\mathcal{G}$ for row  (node) $i$ \citep{Seabrook:2023}. For an asymmetric kernel we note that the degree is then the out degree of the nodes in $\mathcal{G}$. In matrix notation we have
\begin{align}\label{eq:kernel-transition}
    P_\mathcal{G}=D^{-1}_\mathcal{G}K_\mathcal{G}
\end{align}
where $D_\mathcal{G}=\text{diag}(d_\mathcal{G}(1),d_\mathcal{G}(2),\ldots,d_\mathcal{G}(n))$ is a $p\times p$ diagonal matrix with the (out)degrees of $K_\mathcal{G}$. The matrix $P_\mathcal{G}$ can be interpreted as a matrix of transitions for a Markov process on variables $X_i$ with initial value $X_0$. For notational convenience, we often ignore the subscript $\mathcal{G}$ if it is clear which information we use to make the kernel. 

If it is assumed that the transitions remain fixed over time (time homogeneous), then the change from all values of the nodes $(X_{i,t},i\in V)$ to $X_{j,t+1}$ for one step in time from $t$ to $t+1$ is given by \citep{Ravazzi:2015,Levin:2017}
\begin{align}\label{eq:linear-markov-single}
    X_{j,t+1} = \sum_{i\in V} P(i,j)X_{i,t}
\end{align}
This represents the inputs of all variables $X_{i,t}$ for $i=1,2,\ldots,p$, including the node $j$ itself weighted by the transitions $P(i,j)$. 
In matrix notation for $p$-vector $X_{t}$ and $p\times p$ matrix of transitions $P=(P(i,j),i,j\in V)$ we obtain for $k$ steps in time \citep{Stroock:2005}
\begin{align}\label{eq:markov-dynamics}
    X_{t+k} = PX_{t+k-1}=P(PX_{t-k-2})=\cdots = P^kX_t
\end{align}
The $ij$th element of $P^k$ is denoted by $P_k(i,j)$ and represents the transition from node $i$ to $j$ in $k$ steps. Some relevant properties of such processes are briefly discussed in Appendix \ref{app:markov-chains}.

This type of evolution is sometimes referred to as diffusion, and is useful to determine long term behaviour like fixed points \citep{Ravazzi:2015} or communities \citep{Coifman:2006}. 

\subsection{Making the model non-linear}

Although such a linear process is already useful, it is often used for its convenient mathematical properties (e.g., convergence) rather than a linear model emulating realistic features of the process under investigation. Here we seek to emulate (some of) the properties of mental well and ill being more realistically, which  requires a non-linear model \citep[e.g.,][]{Hayes:1998}. Non-linear features in the context of mental well being that have been  mentioned in the literature as being important to take into account are, among others, bistability, sudden transitions, and hysteresis \citep[e.g.,][]{Hayes:1998,Leemput:2014,Kossakowski:2019b,Waldorp:2020,Scheffer:2024}. 

The general non-linear effect is inspired by network models of interacting elements \citep{Leemput:2014,Maas:2020,Kohler:2021,Wunderling:2021,Scheffer:2024}. The interacting elements induce stable states and transitions between these states, and tipping (critical) points that are unstable states such that transitions may occur with small perturbations \citep{Strogatz:1994,Scheffer:2001}. 

We choose a non-linear and one-to-one (bijective) mapping of the Markov process in (\ref{eq:markov-dynamics}) that accords with many of the properties like bistability, hysteresis, and boundedness \citep{Hayes:2020,Schiepek:2023,Hofmann:2025}: $f(x)=\tanh(x)=1/(1+\exp(-x))$, the hyperbolic tangent. In addition to the fact that the hyperbolic tangent is in line with many features of real life processes, it is often used in modelling non-linear processes in physics \citep[e.g.,][]{Kohler:2021,Pham:2024}, in neural networks \citep{Bishop:1995,Rojas:2013}, in machine learning/artificial intelligence \citep{Hastie:2001,Shalev:2014}, in social sciences \citep{agresti:1996,Maas:2020,Finnemann:2026}, and in clinical science \citep{Borsboom:2011,Cramer:2016,Loossens:2020}, making it a well-known model that appears to be appropriate in many fields.
The function $f$ is bounded by $-1$ and $1$, has two fixed points close to $-1$ and $1$, and an unstable fixed point at $0$ (for a given parameter range, see further down). We show $f$ in Figure \ref{fig:tanh}(a), where the intersection with the $45^\circ$ line defines the fixed points; the top and bottom fixed points are stable (blue) and the middle fixed point is unstable (red). A small perturbation to the system close to an unstable fixed point, may lead to a sudden transition \citep{Strogatz:1994,Scheffer:2024}. 
\begin{figure}[t]\centering
\begin{tabular}{@{\hspace{-2em}} c @{\hspace{-3em}} c @{\hspace{-1.5em}} c}
    
    \pgfimage[width=0.4\textwidth]{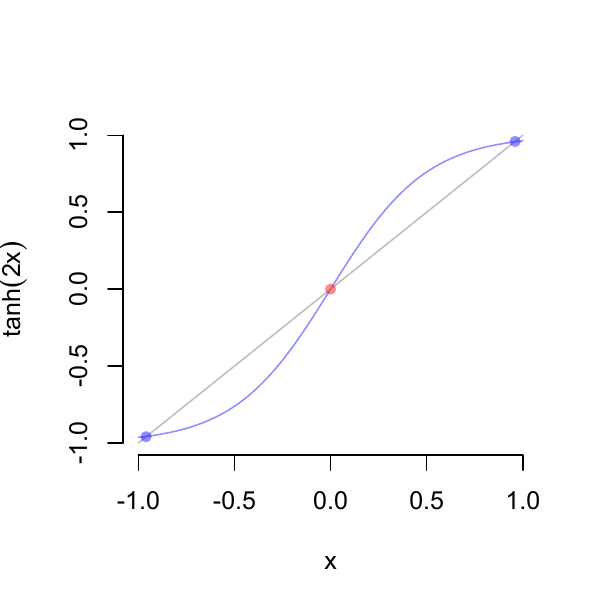}
    &
    \pgfimage[width=0.4\textwidth]{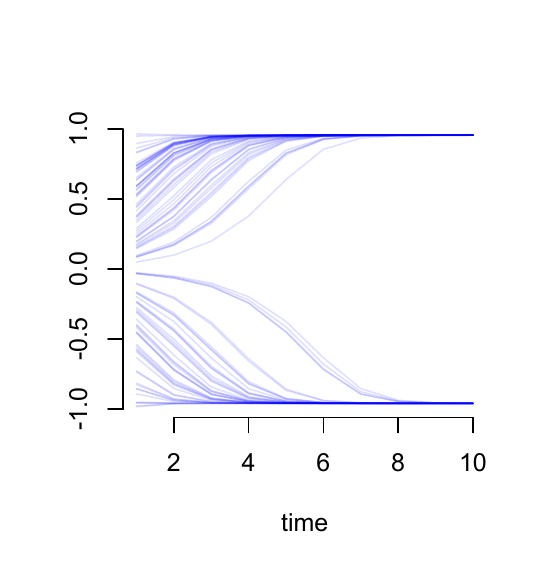}
    &
    \pgfimage[width=0.4\textwidth]{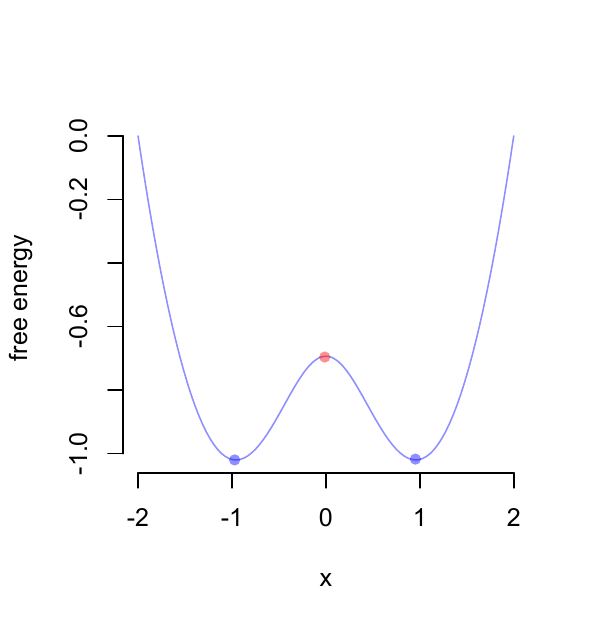}\\[-0.8em]
        (a) & (b)   & (c)\\[-3em]
    \pgfimage[width=0.37\textwidth]{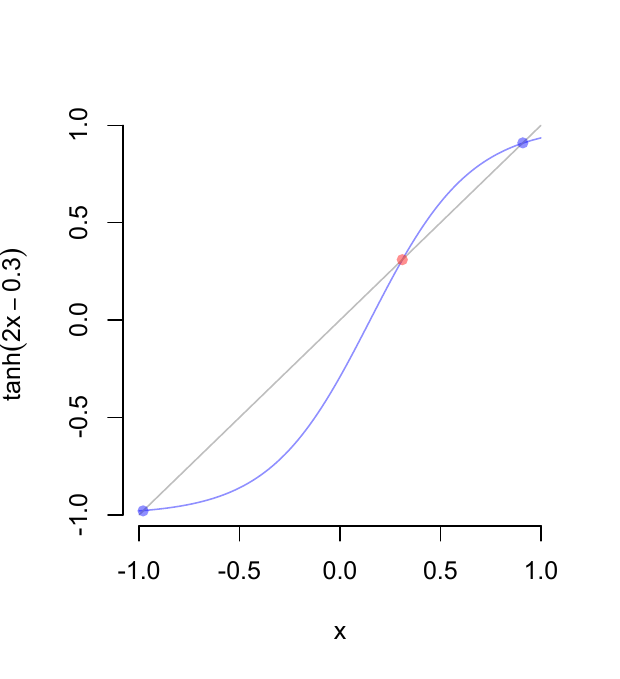}
    &
    \pgfimage[width=0.4\textwidth]{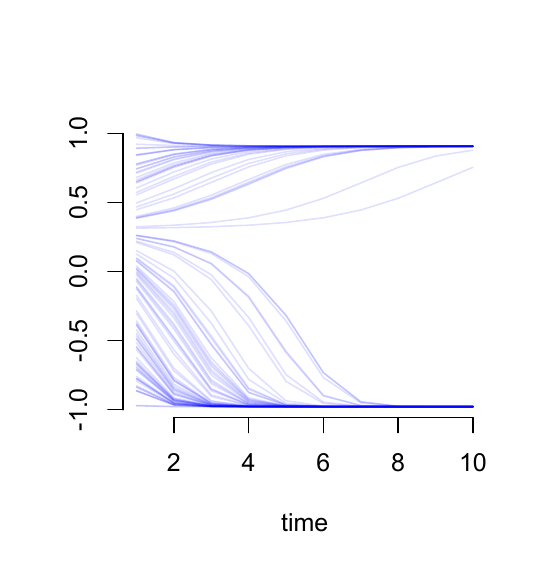}
    &
    \pgfimage[width=0.4\textwidth]{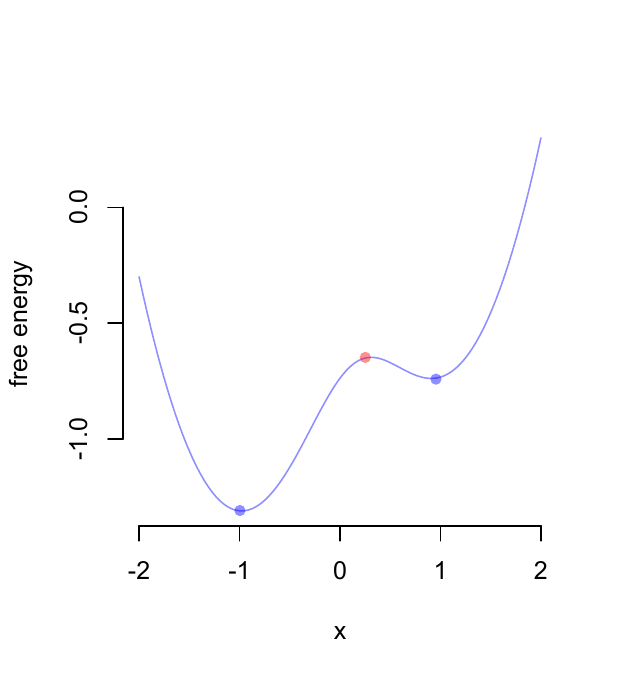}\\[-0.8em]
    (d) & (e) & (f)
\end{tabular}
\caption{ The hyperbolic tangent function in (a) $\tanh(2x)$; the intersection with the $45^\circ$ line gives the three fixed points, the upper and lower fixed points are stable (blue) and the middle fixed point is unstable (red). In (b) is the recursive function $x_{t+1}=\tanh(2x_{t})$, showing that the 100 trajectories at random initial values in $[-1,1]$ move to the stable fixed points close to $-1$ and close to $1$. In (c) we see the free energy of $\tanh(2x_t)$, with the same fixed points as in (a). In (d), (e) and (f) we see a similar set of figures of $x_{t+1}=\tanh(2x_t-0.3)$. }
\label{fig:tanh}
\end{figure}

The evolution of $x_{t+1}=\tanh(2x_t)$ is shown in Figure \ref{fig:tanh}(b), illustrating the process moving towards either one of the stable fixed points, determined by the initial value. In Figure \ref{fig:tanh}(c) is the free energy (the integral of the mean field, see Appendix \ref{app:analyses-R}), showing the two stable fixed points at the extremes, and the unstable fixed point in the center. The free energy representation shows that the two minima are stable, in that, like a ball, it will remain in (or near) the minima given small perturbations, while at the local maximum in the center, a ball will move away given a small perturbation. 

Changes in state can also be effectuated by changes to the shape of the free energy landscape \citep{Scheffer:2001,Cui:2025}. The parameter $\tau$ in $x_{t+1}=\tanh(\gamma x_t + \tau)$, where $\gamma$ referred to as the (intrinsic) strength and $\tau$ the external field \citep{Plischke:1994}, makes the free energy landscape asymmetrical. In Figure \ref{fig:tanh}(f) we see the free energy of $x_{t+1}=\tanh(2x_t-0.3)$, showing that $\tau=-0.3$ increases the proportion of trajectories that end up in the lower fixed point. The corresponding trajectories over time in Figure \ref{fig:tanh}(e) and the fixed points in (d), also show that the center with the unstable fixed point has shifted and the basin of attraction is now larger for the lower stable fixed point. Hence, we can distinguish between the intrinsic process of the variable affecting itself from one time point to the next, and the other variables affecting the likelihood of the trajectory to move either upward or downward. 

Because of the features of the hyperbolic tangent described above, we use it to make the linear Markov process in (\ref{eq:markov-dynamics}) non-linear 
\begin{align}\label{eq:non-linear-markov-process}
X_{j,t+1} =f(X_{j,t}) = \tanh\Bigl( P(j,j)X_{j,t} + \sum_{i\ne j} P(i,j)X_{i,t} \Bigr) 
\end{align}
where $\gamma=P(j,j)$ represents the internal part, the link of the variable to itself, and $\tau= \sum_{i\in V} P(i,j)X_{i,t}$ is the external field. In this formulation the coupling of the other variables represents the external field \citep{Maas:2020,Kohler:2021}, which may change the symmetry of the free energy landscape shown in Figure \ref{fig:tanh}(f). Obviously, other factors (e.g., background variables) could also be included in the external field. Noise is included in the simulations by including the inverse temperature $\beta=1/T$, multiplied with all terms, so that $x_{t+1}=\tanh(\beta (\gamma x+\tau))$; the effect is similar in distribution to the cubic approximation and adding a Gaussian noise variable \citep[see][]{Cobb:1980}. In order to achieve bistability, we set  $\beta = 2$ throughout (see also Appendix \ref{app:analyses-R}).
%It is easily seen that (\ref{eq:non-linear-markov-process}) is a non-linear Markov process, since it depends only on $X_{i,t}$ for $i\in V$. 
In Appendix \ref{app:asymptotic-properties-linear-Markov} we show that this map converges, even if the transitions are allowed to change over time, given certain conditions.

\subsection{Example simulation}

We simulated this process 200 times using data of a patient with a mental disorder to obtain the causal effects for transitions $P(i,j)$ (given in Figure \ref{fig:causal-effects-p1}) and the hyperbolic tangent in (\ref{eq:non-linear-markov-process}) for random initial values in the interval $[-1,1]$. We include many random initial conditions because we account for the uncertainty in which state a patient is at the start of treatment. It is, of course, also possible to use a smaller interval around the last time point known of the patient. Figure \ref{fig:example-patient2} shows the result of these simulations for the eight variables. We have chosen 16 time points here to roughly correspond to 16 weeks of treatment provided once a week. The correspondence between the time points used here and actual time in practice is not clear cut; we can interpret time in the simulation in terms of time points at which the treatment is given, which we here take as once a week. 
\begin{figure}[t]\centering
    \pgfimage[width=\textwidth]{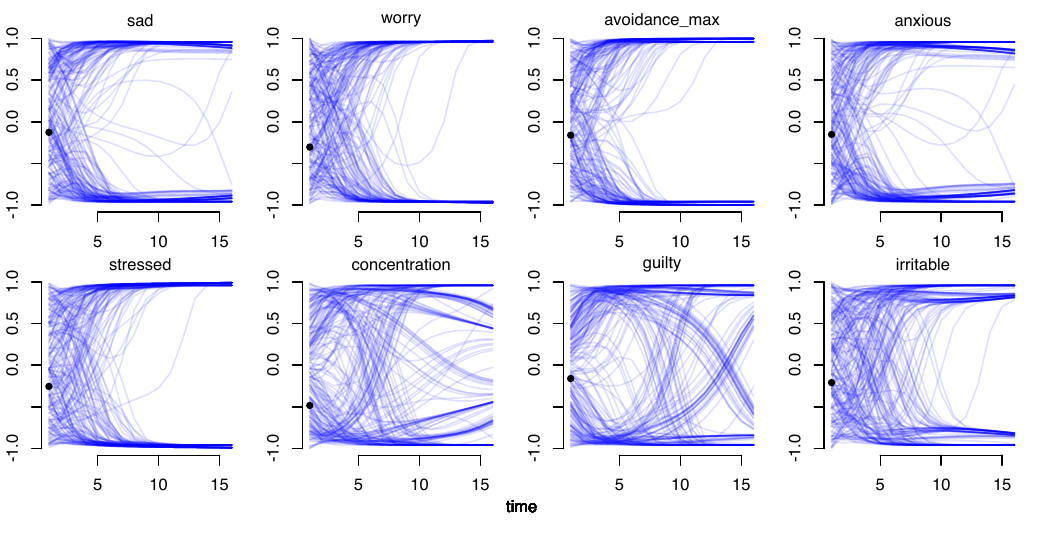}
\caption{ Simulation of 200 trajectories of the non-linear Markov process in (\ref{eq:non-linear-markov-process}) with $\gamma=2$ and the causal effects as in Figure \ref{fig:network-p1}, with initial value chosen randomly from $[-1,1]$ for each run independently. The black dots at $t=0$ are the values of the last time point for this patient $x_0=(-0.13, -0.30, -0.16, -0.15, -0.25, -0.48, -0.16, -0.21)$.  }
\label{fig:example-patient2}
\end{figure}
We see in Figure \ref{fig:example-patient2})that a trajectory is not independent of the initial condition, but also strongly depends on the effect of the other variables, the coupling from the causal effects; this can be seen by many of the trajectories that start at values $>0$ end up at the lower fixed point and vice versa. 

The next step is to determine what could happen given a particular simulated intervention and compare that to a control (not intervening, as shown in Figure \ref{fig:example-patient2}); we therefore need to implement different types of intervention. 

%---------------------------------------------------------------------------
\section{Simulating and comparing interventions}\label{sec:interventions-comparisons}

Having a model to simulate the effects of an intervention, we here describe several possible interventions (by no means exhaustively), and then provide some guidelines on how to evaluate the success of an intervention. Currently, we are mainly focusing on CBT, where one or some of the problems (e.g., symptoms) are trying to be reduced by exposure, for example, or where, together with the patient, a goal is set to be achieved, e.g., engage more in social activities to reduce social anxiety \citep{Santen:2024}. Here we try to capture these interventions by changing something in the network and measuring the effects of these changes using simulations. 

%---------------------------------------------------------------------------
\subsection{Interventions on networks}\label{sec:interventions-networks}
An intervention is an externally induced change to the network. Many different interventions on networks are possible. In terms of the network, the most obvious intervention is on the nodes \citep{Valente:2012,Zhang:2018,Ryan:2025b}, but strategies to affect the system by the external field \citep{Cramer:2016,Jimenez:2023,Cui:2023} have also been suggested. We discuss several here, but this is not an exhaustive list. 
We consider three different interventions
\begin{itemize}
    \item[($i$)] perturbing the level (value) of a node
    \item[($ii$)] (transient) driving towards a target (i.e., adaptive control)
    \item[($iii$)] changing the connection between two nodes (influence)
\end{itemize}
We explain each of these interventions in turn.
%Many more variations of interventions exist, although most are geared towards social behaviour \citep[see][for an overview]{Valente:2012}, including combinations of the above; combining ($ii$) and ($iii$), for example.

\subsubsection{($i$) Intervening on a node}

A change (perturbation) of the level of a node can be represented as a change in initial condition \citep[sometimes called pulse,][]{Ryan:2025b}, i.e., at time $t=0$ the value of node $i$ is set to $X_{i,0}=x_{i,0}$ and at time $t=1$ the process is no longer fixed for variable $X_{i,1}$. Then the dynamics are obtained from (\ref{eq:non-linear-markov-process}). In contrast to a linear process where such an intervention has no long term effect \citep[see e.g., ][and Lemma \ref{lem:asymptotically-uncorrelated} in Appendix \ref{app:asymptotic-properties-linear-Markov} with $U_t=0$ almost surely]{Levin:2017,Ryan:2025b}, in the non-linear Markov process $f(x)=\tanh(x)$, a small shift can lead to a different stable state. For example, consider Figure \ref{fig:tanh}(b), where the initial condition $>0$ determines that the trajectory will move towards the upper stable state, and the other way around for initial condition $<0$. Hence, when a person is near $0$, near the middle unstable fixed point (see Figure \ref{fig:tanh}(a)), a small change in the initial value (perturbation) may change the long term behaviour.

\subsubsection{($ii$) Driving node towards target}

Here, the objective is to reach a particular goal, for example, to reduce avoidance to a lower level. This could be an example of goal directed therapy in cognitive behavioural therapy. In a network we can achieve this by setting a target $u$ such that it represents this reduction, but still depends on the other variables. This can be achieved by adding in the linear model (\ref{eq:linear-markov-single}) a target, such that we obtain \citep{Chow:1975}
\begin{align}\label{eq:tanh-control-1}
X_{j,t+1}=P(j,j)X_{j,t} + \sum_{i\ne j}P(i,j)X_{j,t}+u_j.
\end{align}
Such an intervention corresponds to a soft intervention, where the existing causal effects remain but there is an additional (external) effect $u$ \citep[see Definition \ref{def:soft-intervention} in Appendix \ref{app:assumptions-causal-discovery},][]{Eberhardt:2007}. In the example of stress reduction, we set $u_j=-0.3$, where $j$ refers to the node \textit{stressed}. Then we apply control theory  \citep[e.g.,][]{Chow:1975,Brunton:2022} to reach this target, which we explain in more detail below. 

More realistically, this type of control changes over time, either because the patient is less inclined to be occupied with therapy or because something changes in the treatment. Therefore, it could also be the case that the target is transient. For example, there is a surge at the start of the therapy to reach the target but this is reduced after two or three weeks. We then have that the target depends on time, $u_t$. To emulate such a transient process we use the gamma density function $\Gamma(x,s,r)=x^{s-1}e^{rx}r^s/\Gamma(s)$, with shape parameter $s$ and rate parameter $r$ and $\Gamma(s)$ the gamma function \citep{Bickel:2007}. Three versions of this gamma density are shown in Figure \ref{fig:gamma}. 
\begin{figure}\centering
\begin{tabular}{c @{\hspace{-1em}} c @{\hspace{-1em}} c}
    \pgfimage[width=0.35\textwidth]{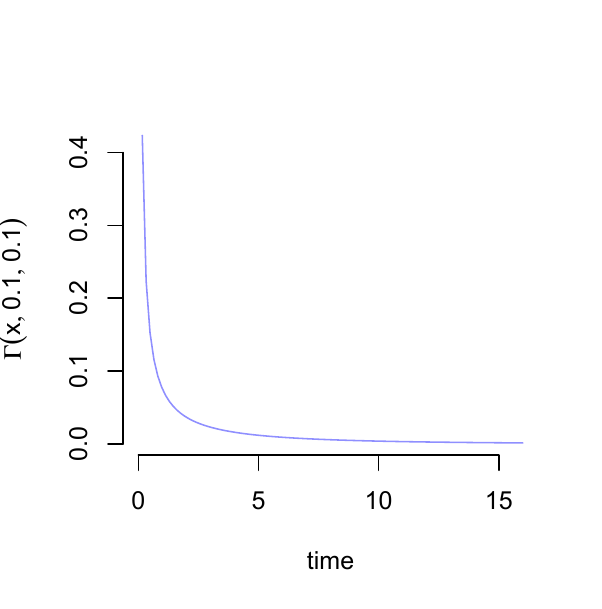}
    &
    \pgfimage[width=0.35\textwidth]{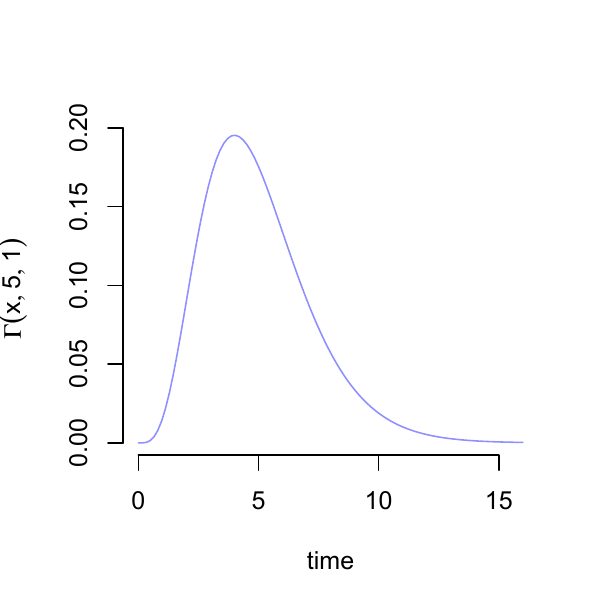}
    &
    \pgfimage[width=0.35\textwidth]{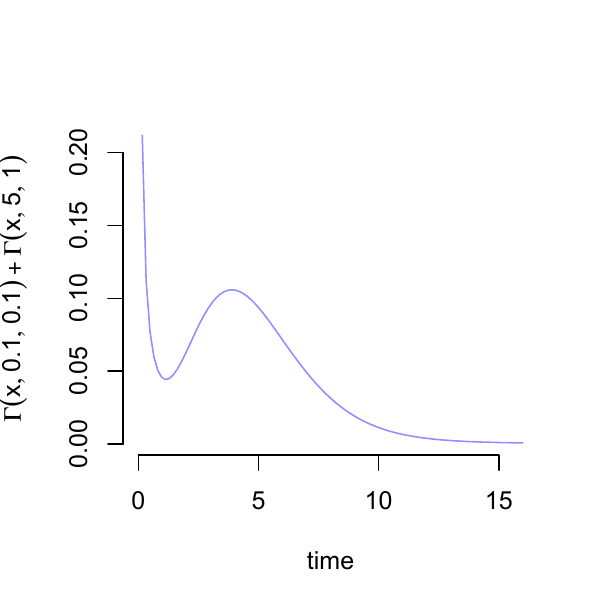}\\
    (a) & (b) &(c)
\end{tabular}
\caption{ The gamma density function $\Gamma(x,s,r)$ shown in (a) with $s=0.1$ and $r=0.1$, in (b) with $s=5$ and $r=1$, and in (c) with  a mixture $0.5\Gamma(x,0.1,0.1) + 0.5\Gamma(x,5,1)$. }
\label{fig:gamma}
\end{figure}

We use a time-dependent target $u_t$ that changes intensity over time and affects the process while the original coupling effects remain (soft intervention, see Appendix \ref{app:causal-basics}). This results in the target $u_t$ being additive and affected by the transitions in $P_k(i,j)$ (see the remark in Appendix \ref{app:asymptotic-properties-linear-Markov}). The process in (\ref{eq:tanh-control-1}) after $t+k$ steps, starting at $t$ is then 
\begin{align}\label{eq:tanh-control}
    X_{j,t+k} 
    % = f \left( P^kX_t + \sum_{s=0}^{k-1}P^s u \right)
        = f \Bigl( P_k(j,j)X_{j,t} + \sum_{i\ne j} P_k(i,j)X_{i,t} + \sum_{i\in V}C_k(i,j) u_{i,t} \Bigr)
\end{align}
where $P_k(i,j)=(P^k)_{ij}$ and $C_k(i,j)=[(I-P^k)(I-P)^{-1}]_{ij}$, and $P$ is the transition matrix; this is the result of the target $u_{i,t}$ considered as external and additive \citep[see, e.g.,][Chapter 2]{Galor:2007}. In control theory, the target $u_t$ can also depend on the states $x_{i,t}$. This may be realised by a sequence of interventions that guides an individual to a target by changing the intervention now and then \citep{Zabczyk:2009}. We show in Appendix \ref{app:asymptotic-properties-linear-Markov} under which conditions this map converges, also for random target $U_t$. 

We illustrate the effect of a transient intervention on the trajectories obtained from the causal effects of real data and mapping them using (\ref{eq:tanh-control})  in Figure \ref{fig:intervention-stress}. The intervention is on the single variable \textit{avoidance}, and is transient with $s=5$ and $r=1$, as in Figure \ref{fig:gamma}(b). 
\begin{figure}\centering
    \pgfimage[width=\textwidth]{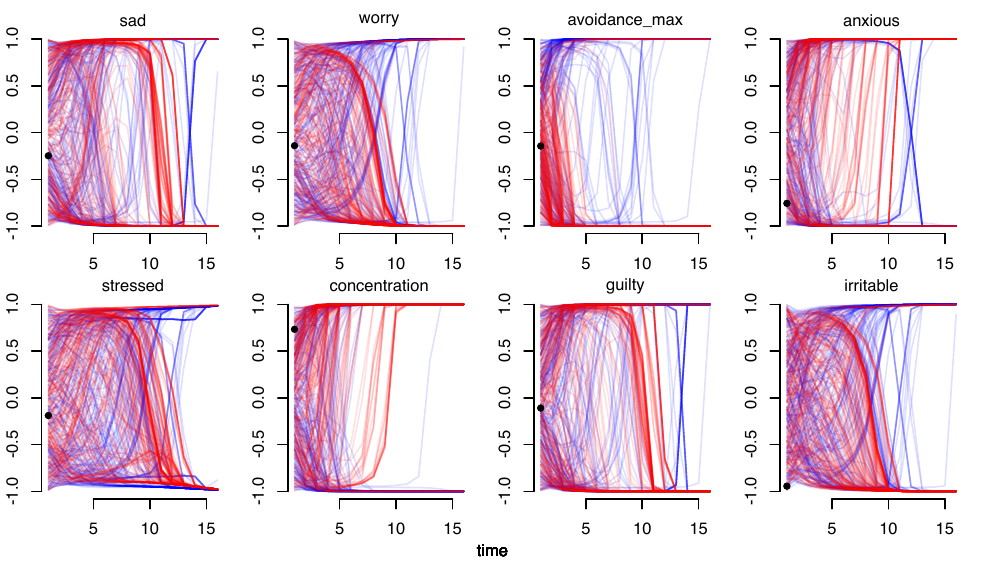}
\caption{ Simulations of 200 trajectories for both control (no intervention, blue) and intervention cases (red). Initial values are drawn randomly from the interval $[-1,1]$ and the dot represents the final position of the patient in the EMA data. A transient intervention on the single variable \textit{avoidance} is applied, with parameters  $s=5$ and $r=1$ (red trajectories). }
\label{fig:intervention-stress}
\end{figure}
It is clearly seen that the transient intervention has the desired effect since almost all intervention trajectories (red lines in Figure \ref{fig:intervention-stress}) improve (except \textit{anxious}). This shows that the coupling really matters in that only the variable \textit{avoidance} is intervened upon, while all other variables improve as well. This also connects to interventions on feedback relations; in Figure \ref{fig:causal-effects-p1} \textit{avoidance} is part of two cycles and the intervention has large effects (see also Section \ref{sec:comparing-interventions}). Obviously, it is also possible to intervene on two or more variables simultaneously; in Figure \ref{fig:intervention-35} in Appendix \ref{app:analyses-R} we show an example of an intervention on both \textit{concentration} and \textit{avoidance}. It shows that the reduction in symptom levels decreases even faster.

Corresponding to the illustration in Figure \ref{fig:intervention-stress}, we show the free energy in Figure \ref{fig:free-energy-over-time}, using the mean field approximation (see Appendix \ref{app:analyses-R}), similar to Figure \ref{fig:tanh}(a), of the variable sad. Each plot in Figure \ref{fig:free-energy-over-time} represents a landscape obtained
from the free energy averaged over the simulated data for the control and intervention condition, based on the real data for the causal effects. We can therefore see how the landscape changes as it unfolds over time. We see in Figure \ref{fig:free-energy-over-time} that over time the landscape of the free energy and the minima (attractors), where the variable sad will end up, changes. For the control condition (blue) the minima become more equal (symmetric) over time, while for the intervention condition (red), the minima first become more balanced and then change more toward $-1$ and the minimum at 1 has disappeared. This illustrates the effect of the intervention with respect to the control in relation to the free energy and how this change
might be thought of as a healthier state. Note that this change is because of both the change in the variable sad itself and the coupling with the other variables. Connecting to sad in Figure \ref{fig:intervention-stress} (top left), at earlier times some of the trajectories move upwards, but at later times, nearly all trajectories move downwards,
aligning with the reshaping of the free energy landscape in Figure 8.
\begin{figure}\centering
    \pgfimage[width=\textwidth]{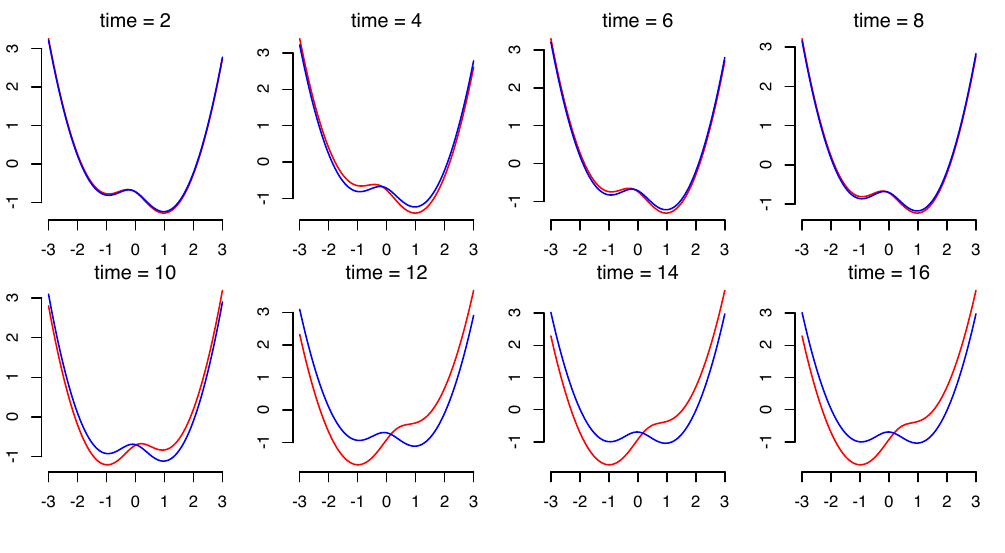}
\caption{ Free energy approximation of the mean field for the variable sad in both the con-
trol (blue) and intervention (red) condition for different snapshots over time. In the control
condition the minima balence out over time, while in the intervention condition the global
minimum changes toward the lower (improved) end. }
\label{fig:free-energy-over-time}
\end{figure}

\subsubsection{($iii$) Intervening on connections between nodes}

Changing a connection between two nodes represents the process of removing an association, or strengthening the influence of one variable on another node. A change in connection is performed at the level of the kernel $K_\mathcal{G}$, which is then transformed to a transition matrix $P_\mathcal{G}$  (see (\ref{eq:kernel-transition})). If we change $K_\mathcal{G}(i,j)$ to $0$, for example, then we can consider the effect of severing the connection $i-j$. We can then determine the dynamics using (\ref{eq:non-linear-markov-process}) and compare the effect with the control (i.e., no intervention), similar as before. Making a change in the kernel $K_\mathcal{G}$ and then transforming to $P_\mathcal{G}$ ensures that the properties to obtain convergence still hold. However, this process possibly also changes (reduces or enhances) some of the other causal effects. Alternatively, a change can also be made directly to the transition matrix $P_\mathcal{G}$ so that nothing else is effected. Convergence is then not guaranteed, although it could still be valid, but the induced change is more in line with the conceptual representation.

%------------------------------

\subsection{Comparing starting points of interventions}\label{sec:comparing-interventions}

To compare the effectiveness of the treatment, we use metrics that are commonly defined for linear Markov processes on nodes \citep{Coifman:2006} and on connections \citep{Hammond:2013}. Similar to statistical physics \citep{Plischke:1994}, we consider the average value $\bar{x}_t$ (over all variables) for both the control and intervention trajectories as an order parameter to use for comparisons. This works well because the values are bounded to the interval $[-1,1]$ and a value below or above 0 indicates how well the treatment is going. Hence, comparing the difference as an effect measure seems reasonable. This idea is inspired by linear Markov processes \citep[][see Appendix \ref{app:distance-measures-linear-markov}]{Coifman:2006}. 

To show how this works, and at the same time consider which intervention might be most efficient, we consider an example. Using the same patient data with causal graph and effects in Figure \ref{fig:causal-effects-p1} and such effects as shown in Figure \ref{fig:intervention-stress}, we can compare the effect of perturbing a single node on all other nodes and evaluate the difference between the control and intervention trajectories on average over time. This is shown in Figure \ref{fig:intervention-effect-time}, where effect size is given for the eight variables, and so a high value means that the effect of the intervention is large. 
\begin{figure}[t]\centering
    \pgfimage[width=\textwidth]{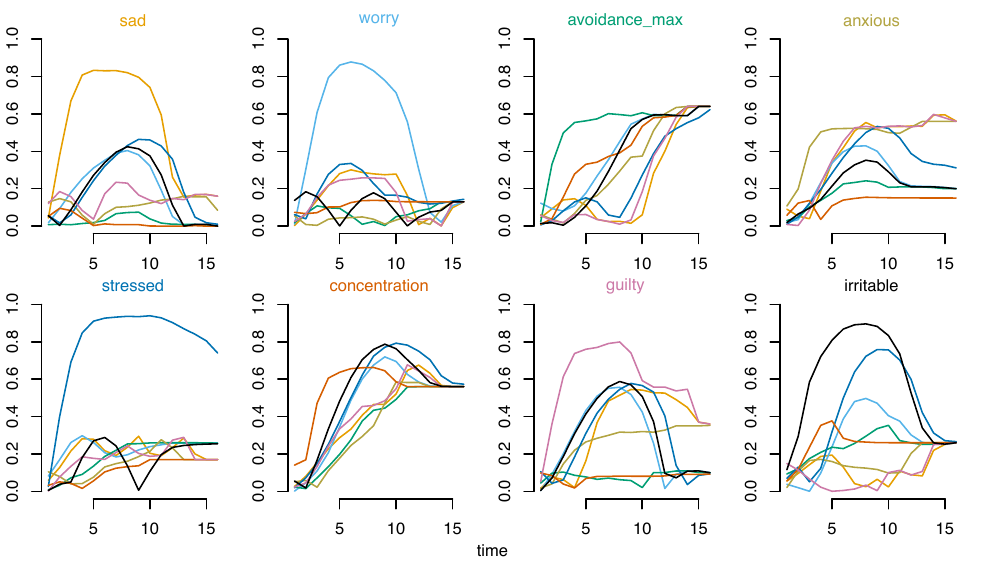}
\caption{ Effects over time comparing control and intervention trajectories on average using 200 simulations each.  The color of the lines in each figure correspond to the color of the variable name in the title of each figure. The settings for the intervention are the same as for Figure \ref{fig:intervention-stress}. Variable names in the title are the intervention variable in each figure.  }
\label{fig:intervention-effect-time}
\end{figure}
Each line can be interpreted as the average difference between all of the control and intervention trajectories at each time point for an intervention of the variable name in the title. This explains why the colour of the fastest growing line in each figure corresponds to the name of the variable. From such a plot we see that, for example, effects of perturbing \textit{stressed} is that stress improves (large effect) but the effect on all other variables remains low. This suggests that the effect of an intervention on \textit{stressed} does not propagate through the network and so does not lead to overall improvement. In contrast, perturbing \textit{avoidance} or \textit{anxious} shows a a larger effect on most other variables. Additionally, the effect remains and starts early after the onset of the intervention. Hence, it would make sense to try and intervene on either of these variables or both (see Figure \ref{fig:intervention-35} in Appendix \ref{app:analyses-R}), if possible. Such evaluations take into account the fact that there is a sizeable effect, that the effect takes hold quickly, and that the effect remains for a longer time. 

In an attempt to quantify such results, we use two different measures based on the linear case (see Appendix \ref{app:distance-measures-linear-markov}). The results are shown in Table \ref{tab:intervention-results}. First, for both the control and intervention condition, we take the average across all simulations at the last time point (in our example 16). These are in the first two columns in Table \ref{tab:intervention-results}. Notice that the averages for the control condition are almost all near 0, while for the intervention condition several are $\neq 0$. Then we consider the difference between control and intervention; this is in the third column in Table \ref{tab:intervention-results}. The difference refers to the effect the intervention has with respect to the control on all variables simultaneously at the last time point. We see that the effects for \textit{avoidance} and \textit{concentration} are strongest, implying that by intervening on either of these, the effect on all other variables through the coupling is relatively strong. Note that the effect of \textit{anxious} is also reasonably strong, but that it is in the wrong direction. Drawbacks of this measure are that (a) it only represents the last time point and does not take into account the swiftness of the response (steepness of the improvement), and (b) it is bounded by $-2$ and $2$, and so may have difficulty in distinguishing between variables. Therefore, we also use a second measure: we take the averaged effects of both control and intervention condition across the entire time interval, then sum the differences between the control and intervention averages, and then we average across all variables; this is in the final column called \textit{sum effect}. 
\begin{table}
\begin{tabular}{ l r r r r}
perturbation     	&control		&treat 	&control - treat 		&sum effect\\ \midrule
sad            	&-0.02 		&-0.08      &0.06  			&3.61\\
worry         	&-0.01 		&-0.03      &0.02 			&2.53\\
avoidance  	&0.02 		&-0.31      &0.33 			&5.49\\
anxious        	&0.01	 	&0.23      &-0.22 			&4.33\\
stressed      	&0.01 		&-0.06      &0.07			&2.30\\
concentration	&0.02		&0.27     	&-0.25 			&5.05\\
guilty        		&0.03 		&-0.07      &0.10			&3.80\\
irritable      	&0.01 		&0.02      &-0.01 			&2.68\\ \midrule
\end{tabular}
\caption{Effects of the averaged difference between control and intervention trajectories over the entire time interval. }
\label{tab:intervention-results}
\end{table}
Similar to the results from the plots and on the first measure, we see that the largest effects are on \textit{avoidance} and \textit{concentration}. Referring back to Figure \ref{fig:causal-effects-p1}, both variables  \textit{avoidance} and \textit{concentration} are part of cycles with negative feedback relations, and so it seems reasonable to expect large effects. These results are available for five patients, including patient 1 discussed here, in Appendix \ref{app:results-5patients}. The five cases illustrate different treatment foci following from the framework; patient 1 \textit{avoidance} and \textit{concentration}, patient 2 \textit{avoidance}, patient 3 \textit{stress} and \textit{concentration}, patients 4 and 5: \textit{concentration}.

%---------------------------------------------------------------------------
\section{Discussion and conclusion}\label{sec:discussion-conclusion}

Clinical decision-making regarding the choice of treatment in mental health care is generally based on a combination of diagnostic classification (for example, according to the DSM-5), generic models tailored to the individual patient, and individual case conceptualization based on clinical expertise, intuition and experience. This approach involves many instances of subjective decision-making, with the risk of numerous biases. To address these limitations, data-driven and process-oriented approaches to clinical case formulation may improve clinical decision making and the associated choice of treatment strategies. Here we have proposed a comprehensive (covering methods for each stage of the process) and flexible (each component can be replaced or adapted) framework to support clinical decision-making. It is specifically designed to provide recommendations for selecting the treatment focus prior to the start of treatment, taking full account of the patient’s specific symptoms, feelings, behaviour, beliefs, and contextual factors over time. 

The framework outlined here consists of three steps: (a) in the first step, a causal graph is made with causal effects based on EMA data, then (b) in step two, various scenarios are assessed, based on a non-linear model in which either no action is taken (control) or an intervention (a perturbation) is simulated, and, finally, (c) step three determines the extent to which the intervention is more successful than the control without intervention. The framework clearly indicates the starting point of treatment in order to be as effective as possible. This process can, of course, be repeated throughout the course of treatment, given that the patient’s condition—and therefore both the causal graph and the causal effects—change as treatment progresses. 
The current framework is ideally suited for use with EMA, which provide a wealth of information measured in everyday life outside of treatment sessions; the data can be collected before the actual treatment begins, and assessments can continue throughout the course of treatment. EMAs reflect the patient’s daily experiences before they are distorted by the patient’s memory bias or the therapist’s interpretation bias. Also of interest is that EMAs can capture disorder specific and comorbid symptoms within one causal graph, making it a truly personalized approach based on the patient’s symptoms and not constrained by diagnostic boundaries. 
If EMAs are not available, other causal graphs can be used within this framework, such as generic cognitive behavioural models (CBM) tailored to the individual patient, or perceived causal networks \citep[PECAN,][]{Andreasson:2023,Klintwall:2023}. Both CBM and PECAN construct causal networks of variables considered relevant by the patient and the therapist. They can also be combined with EMA \citep[e.g.,][]{Scholten:2025}. Future research could examine whether incorporating perceived relationships into empirically driven EMA causal graphs adds value, or whether it might even detract from it.

The model could also be further developed. We now have quantified success based on the extent of the difference between intervention and control trajectories and responsiveness to the intervention. Another option is to take the strongest response to the intervention (averaged across all variables). Several approaches are possible, but only a few will prove to be empirically relevant. We believe that our framework is fully modular in this respect.

The data are directly related to the causal effects, which are used to simulate the time series with the non-linear model. We chose a model that matches generic phenomena observed in data, but the non-linear model was not fitted to data. This can be improved by, for example, estimating which Taylor expansion will fit the data best using polynomial regression \citep[e.g.,][]{Brunton:2016,Brunton:2022}. Also, the measures of causal effects could be replaced by, for example, measures of relative entropy \citep{Janzing:2013}. Depending on the results of future testing, the model could be changed to equally valid or more suitable models, such as the Blume-Capel mean field model with three or more stable states \citep[][]{Maas:2026,Waldorp:2026}. 
Finally, it is of great importance investigating the clinical validity of our framework in the processes of case formulation and choice of treatment focus in future research. The validation of the current framework should demonstrate that it makes a valuable contribution to clinical decision-making – decisions that would not be made without this framework, and that result in more successful treatment.

%-------------------
\paragraph{Acknowledgements.} JMBH was supported by the Netherlands Organisation for Scientific Research (NWO) under VENI grant number 221G.110. JMBH, LJW AJ and TM have been supported by the gravitation project ‘New Science of Mental Disorders’ (www.nsmd.eu), supported by the Dutch Research Council and the Dutch Ministry of Education, Culture and Science (NWO gravitation grant number 024.004.016) awarded to AJ. 

\paragraph{Materials.}

%---------------------------------------------------------------------------
%\bibliography{references}

%---------------------------------------------------------------------------
\newpage
\appendix
%---------------------------------------------------------------------------

%----------------------
\section{Causal inference basics}\label{app:causal-basics}
Let $\mathcal{G}$ be a graph with nodes (vertices) $V=\{1,2,\ldots,p\}$ and edges $E$ between the nodes. With each node $i$ we identify a random variable $X_i$ with marginal distribution $\mathbb{P}(X_i)$ and joint distribution $\mathbb{P}(X)$. A vector of variables for nodes in subset $A$ is denoted $X_A=(X_i, i\in A)$. We introduce informally three different types of edges: an undirected edge $i-j$ means that nodes $i$ and $j$ are adjacent, a directed edge $i\to j$ means that $i$ causes $j$, and a bidirected edge $i\leftrightarrow j$ means that there is uncertainty about the causal relation and there might be a confound. If $i\to j$ we also say that $i$ is a parent of $j$ and $i\in \textsc{pa}(j)$. A path between subsets $A$ and $B$ of $V$ is a set of edges between a node $i\in A$ and a node $j\in B$ such that $i-\cdots - j$, for any set of edges. A directed path from $A$ to $B$ is the set of edges such that $i\to \cdots \to j$. A collider is node $k$ on the path $i\to k\leftarrow j$. 

We start with an undirected graph where all edges are of the type $i- j$ to explain the concept of separation between disjoint subsets of nodes $A$, $B$ and $S$ from $V$ such that $A\cup B\cup S=V$. $A$ and $B$ are separated by $S$ in $V$ if for all paths between $A$ and $B$ there are nodes in $S$, i.e., $S$ is a cutset \citep[][]{Lauritzen96,Wainwright:2019}. This concept of separation is connected to conditional probability for the distribution $\mathbb{P}$ over the vector $X$ in $\mathbb{R}^p$ in the following way. If $A$ and $B$ are separated by $S$ in graph $\mathcal{G}$, then the variables $X_A$ are conditionally independent of $X_B$ given $X_S$ in distribution $\mathbb{P}$. This is called the Markov property, and sometimes we say that the graph is (Markov) compatible with the distribution \citep{Pearl:2000}. The Markov property implies that, whenever we find that two variables $X_i$ and $X_j$ are conditionally dependent, then there is an edge $i-j$. The Markov property also implies that a distribution can be decomposed into a product of smaller units (cliques, fully connected subsets of nodes) that are easier to compute \citep[Hammersley-Clifford Theorem,][Chapter 11, see also Lemma \ref{lem:hammersley-clifford} in Appendix \ref{app:assumptions-causal-discovery}]{Wainwright:2019}. To infer that there is no edge $i-j$, we must also assume the implication the other way around, that is, whenever $X_i$ and $X_j$ are conditionally independent, then there is no edge $i-j$. This is called faithfulness \citep{Peters:2017}. 

In a directed graph, the concept of separation is extended in the following way. Let $\mathcal{G}^m$ be the moral graph of a directed graph $\mathcal{G}$ such that $\mathcal{G}^m$ has the same edges as $\mathcal{G}$, the arrow heads are removed, and whenever $i\to k\leftarrow j$, an additional edge $i-j$ is introduced. The separation property of the moral graph can then be extended to the directed graph in functions that only depend on $X_i$ and the parents of $i$, i.e., $\textsc{pa}(i)$ \citep[][Lemma 3.21]{Lauritzen96}. Then for disjoint subsets $A$, $B$ and $S$, $A$ and $B$ are $d$-separated by $S$ if and only if in the moral sub-graph $\mathcal{G}^m(A,B,S)$, including all edges pointing towards $A$, $B$ or $S$, $A$ and $B$ are separated by $S$ \citep[this equals the definition in Appendix \ref{app:assumptions-causal-discovery}, see][Chapter 3]{Lauritzen96}. 

Unfortunately, different directed acyclic graphs can have the same set of $d$-separations and, hence, the same set of conditional independencies. Two graphs are called Markov equivalent if their undirected edges (skeletons) are the same and if they have the same set of colliders \citep{Verma:1991,Andersson:1997}. For instance, $i\to j\to k$ is Markov equivalent to $i\leftarrow j \leftarrow k$. Markov equivalent sets grow exponentially with the number of nodes when it may be assumed that there are latent confounders \citep[embedded causal models,][]{Verma:1991}.

\subsection{Causal Identification/Discovery Strategies}

One way to solve this issue was developed by \citet{Spirtes:1996}, and resulted in the algorithm fast causal inference (FCI). Instead of assuming that all relevant variables are observed (causal sufficiency), it is assumed that some nodes remain unobserved, so that ($i$) $i\to j$ means that $i$ is a (possibly indirect) cause of $j$, and ($ii$) $i\leftrightarrow j$ means that in some of the possible graph in the Markov equivalence class, the edges are not consistent so that there is uncertainty, possibly also a confounder $c$ such that $i\leftarrow c \to j$ \citep[see e.g.,][]{Spirtes:2000,Zhang:2008c,Mooij:2020a,Waldorp:2025}. The result of FCI is therefore an equivalence class in which there is uncertainty about some of the causal relations. We provide more rigorous definitions and assumptions of FCI in Appendix \ref{app:assumptions-causal-discovery}. 

To obtain a causal graph $\mathcal{G}$ we use the algorithm fast causal inference \citep[FCI,][]{Spirtes:1996}, implemented in the \texttt{pcalg} package in \texttt{R}. The FCI algorithm for Gaussian variables, requires the covariance matrix of the data to determine partial correlations \citep{Spirtes:1996}. Since, we use individual time series, we can invoke knowledge of time-series analysis \citep{Brockwell:2009}, where assumptions like stationarity are important \citep{Ryan:2025}. A common choice in modelling time series is to use a lagged version \citep{Haslbeck:2021}, which we can use for the estimation of causal effects, that is $X_t = \phi X_{t-1} + \varepsilon_t$, where $\phi$ is a (set of) parameter(s) and $\varepsilon_t$ is independent Gaussian with mean 0 and variance $\sigma^2$ \citep[an autoregressive model with lag 1,][]{Brockwell:2009}. Then we can use the covariance matrix 
\begin{align}\label{eq:covariance-time-series}
C_X(1) = X^\top(I-\phi B)^\top (I-\phi B) X,
\end{align}
where $X=(X_i,i\in V)$ is a $T\times p$ matrix with all $T$ timepoints and $p$ variables, and $B$ is the backshift operator (i.e., shifting each observation by one lag). Although the choice of a lag one model is common, there are many considerations and it is well worth investigating if the model is appropriate \citep{Haslbeck:2025c}.

%Basically, FCI starts with a full graph, where any node is connected to any other node, and then removes an edge if there is evidence of conditional independence. For example, if the variables are assumed to be multivariate Gaussian, then conditional independence is tested using partial correlations. Such tests are performed for many combinations of subsets of the nodes $V\backslash \{i,j\}$, where $i,j$ are the nodes for which dependence is investigated. To obtain some of the directions, note that if the pattern $i-k-j$ is obtained for three nodes $i, j, k$, and conditioning on $X_k$ induces a correlation between $X_i$ and $X_j$, then the only explanation for this is $i\to k\leftarrow j$ \citep{Pearl:2000}. FCI is the first algorithm that properly takes into account the presence of latent confounds, such that the result is a mixed set of edges for which sufficient evidence of a causal is obtained, and a set of edges where possibly an unobserved confound may also explain the conditional independence results \citep{Spirtes:1996}. The concept of $d$-separation and the assumptions of FCI are given in Appendix \ref{app:assumptions-causal-discovery}.

%---------------------------------------------------------------------------
\section{Assumptions of causal discovery}\label{app:assumptions-causal-discovery}
A graph $\mathcal{G}$ is composed of a set of vertices (nodes) $V=\{1,2,\ldots,p\}$ and a set of edges (connections) $E$. Nodes are indicated by small case letters $i$, $j$, $k$; the associated random variables are denoted $X_i$, $X_j$ and $X_k$, and for a subset $A$ of $V$, we denote by $X_A$ the vector $(X_i:i\in A)$. If $i\to j$ then $i$ is a \textit{direct cause} (parent) of $j$, $i\in \textsc{pa}(j)$, and $j$ is a \textit{direct effect} (child) of $i$. For $i$ to be a cause of $j$, it means that setting $X_i$ to a specific value (say $x_i$), is followed by a change in the distribution of $X_j$ (see also below). A \textit{path} is a sequence of alternating nodes and edges such that each node is the direct cause or direct effect of the next node in the sequence, irrespective of the orientation of the edge (e.g., $i\to j\leftarrow \cdots \to k$). If on a path we have $i\to k \leftarrow j$, then $k$ is a \textit{collider} on this path. A path is a \textit{directed path} if all edges are in the same direction, i.e., $i\to j\to \cdots \to k$ is a directed path from $i$ to $k$.
%, and $i\leftarrow j\leftarrow \cdots \leftarrow k$ is a directed path from $k$ to $i$. 
If there is a directed path $i\to j\to \cdots\to k$, then $i$ is a \textit{cause} (ancestor) of $k$ and $k$ is an \textit{effect} (descendant) of $i$. In a \textit{directed acyclic graph} (DAG) there are no directed paths from a node to itself (no cycles). The separation property of nodes in a DAG is described by $d$-separation \citep[e.g.,][]{Spirtes:1993,Lauritzen96,Pearl:2000}.

\begin{definition}{\rm ($d$-separation)}\label{def:d-separation}
A path $(\ldots,i,k,j,\ldots)$ is blocked by set $S\subset V$ if 
\begin{itemize}
\item[$\circ$] there is a collider $i\to k\leftarrow j$ with neither $k$ nor any descendants of $k$ in $S$, or if

\item[$\circ$] there is a non-collider of the type $i\to k\to j$, $i\leftarrow k\leftarrow j$, $i\leftarrow k\to j$ such that $k\in S$. 
\end{itemize}
If all paths between subsets $A$ and $B$ are blocked by $S$, then $A$ and $B$ are $d$-separated by $S$.
%We denote that $A$ is $d$-separated from $B$ by $C$ as $A\independent B\mid C$.  Let $\bf{A}$, $\bf{B}$ and $\bf{C}$ be disjoint subsets of $\bf{V}$ in the graph $G$. If all paths between any node in $\bf{A}$ and any node in $\bf{B}$ are $d$-separated by $\bf{C}$, then $\bf{A}$ and $\bf{B}$ are $d$-separated by $\bf{C}$ and denote this by $\bf{A}\independent \bf{B}\mid \bf{C}$.
\end{definition}
\begin{definition}{\rm (Causal evidence, \citealp[][Book III, Chapter XII]{Mill:1843}; \citealp[][Chapter 9]{Spirtes:1993}; 
\citealp[][Proposition 10]{Mooij:2020a})}\label{def:causal-evidence}
Let $i$ and $j$ be nodes in graph $\mathcal{G}$, with nodes $V$, associated with random variables $X_i$ and $X_j$ for $i,j\in V$. Then $i$ is a cause of $j$ if the following three conditions are satisfied:
\begin{itemize}
\item[1.] There is a statistical dependence between $X_i$ and $X_j$,
\item[2.] $X_i$ precedes $X_j$ (in time), 
\item[3.] There is no confounding between $X_i$ and $X_j$ (no alternative explanation for the statistical dependence between $X_i$ and $X_j$). 
\end{itemize}
\end{definition}
\begin{definition}{\rm (Causal intervention, \citet{Woodward:2003,Gillies:2019})}\label{def:causal-intervention}
For $i$ and $j$ in the set of nodes $V$, $i$ is a {\em cause} of $j$ if a change in $X_i$ is followed by a change in the probability distribution of $X_j$. If $i$ is a cause of $j$ we write $i\to j$. (We also sometimes write $X_i\to X_j$.)
\end{definition}

The definitions of causal intervention and causal evidence have been shown to be equivalent, given Assumption \ref{ass:markov-condition} and Assumption \ref{ass:faithfulness-condition} below  \citep{Waldorp:2025}.
%
%\begin{definition}{\rm (Direct cause, \citet{Woodward:2003})}\label{def:direct-cause}
%For $A$ and $B$ in the set $\bf{V}$, $A$ is a {\em direct cause} of $B$ ($A\to B$) if a change in $A$ is followed by a change in the probability distribution of $B$, given that all other variables in $\bf{V}$ are held fixed.
%\end{definition}
%

%%
%\begin{assumption}{\rm (Existence and measurability)}\label{ass:existence-measurability}
%There exists a set $V$ of random variables. We have a set of observed random variables associated with the index set$O\subset V$ and there are random variables that are not measured ($V$ without $O$), and so are not in $O$. The values any random variable can obtain can be distinguished, i.e., a variable can be measured. The values are intervals in the case of continuous variables. 
%\end{assumption}
%%

%
\begin{assumption}{\rm (Causal Markov condition)}\label{ass:markov-condition}
Let $A$, $B$ and $S$ be disjoint sets in $V$. Then, $A$ and $B$ are $d$-separated given $S$ implies that the random variables $X_A$ and $X_B$ are conditionally independent given $X_S$. In symbols we write $A\independent B\mid S \Rightarrow X_A\independent X_B\mid X_S$.
\end{assumption}
\begin{assumption}{\rm (Causal faithfulness condition)}\label{ass:faithfulness-condition}
Let $A$, $B$ and $S$ be disjoint sets in $V$. Then, $X_A$ and $X_B$ are conditionally independent given $X_S$ implies that the nodes in $A$ and $B$ are $d$-separated given $S$. We write $X_A\independent X_B\mid X_S \Rightarrow A\independent B\mid S$
\end{assumption}
\begin{assumption}{\rm (Causal sufficiency)}\label{ass:causal-sufficiency}
A set $O\subseteq V$ is causally sufficient if there is no hidden variable $k\notin O$ such that $k$ is a common cause of any $i$ and $j$, both in $O$, in the graph $\mathcal{G}$ over $V$ nodes. 
\end{assumption}
\begin{assumption}{\rm (No selection bias)}\label{ass:selection-bias}
A set of variables $O\subseteq V$ has no selection bias if there is no hidden variable $k\notin O$ such that $k$ is a common effect of any $i$ and $j$, both in $O$, in the graph $\mathcal{G}$ over $V$ nodes. 
\end{assumption}

Both the causal Markov and faithfulness conditions are standard assumptions of FCI, while causal sufficiency and no selection bias are not assumptions of FCI but of many other algorithms, like the PC algorithm. 

The following is important for obtaining conditional independencies in an undirected graph, which can be obtained from a directed graph, representing the same conditional independence relations \citep[][Lemma 3.21]{Lauritzen96}. Let $\mathcal{G}$ be a directed acyclic graph, as defined above. Then the moral graph $\mathcal{G}^m$ is obtained from $\mathcal{G}$ by keeping all edges, all arrowheads are removed from the edges in $\mathcal{G}$, and any unconnected parents of the same child in $\mathcal{G}$ are connected by an edge in $\mathcal{G}^m$. 

\begin{definition}{(Factorisation undirected graph)}\label{def:undirected-factorisation}
A factorisation of $\mathcal{G}^m$ is a representation of the joint probability density $d\mathbb{P}(X \le x)/d X=p(x)>0$ (strictly positive), in fully connected subsets of the graph (cliques), such that 
\begin{align}\label{eq:undirected-factorisation}
p(x)=\prod_{C\in \mathcal{C}} p_C(x_C),
\end{align}
where $\mathcal{C}$ is the class of cliques of graph $\mathcal{G}^m$. 
\end{definition}

\begin{lemma}{\rm (Hammersley-Clifford \citep{Wainwright:2019,Lauritzen96})}\label{lem:hammersley-clifford}
Let $\mathcal{G}^m$ be the moral graph of the directed acyclic graph $\mathcal{G}$, and let the vector $X$ have strictly positive density. Then the following are equivalent:
\begin{itemize}
\item[(a)] the vector $X$ factorises according to the cliques of $\mathcal{G}^m$, and 
\item[(b)] the vector $X$ is Markov with respect to graph $\mathcal{G}^m$.
\end{itemize}
\end{lemma}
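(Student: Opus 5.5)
The plan is to prove the Hammersley--Clifford equivalence of Lemma \ref{lem:hammersley-clifford} in both directions for the moral graph $\mathcal{G}^m$. Throughout, only the undirected separation property on $\mathcal{G}^m$ from Appendix \ref{app:causal-basics} is used: for disjoint $A,B,S$ with $A\cup B\cup S=V$, if $S$ is a cutset between $A$ and $B$, then $X_A\independent X_B\mid X_S$. Positivity $p>0$ is assumed, and "Markov" is read in the pairwise/global sense on $\mathcal{G}^m$.

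\emph{(a) $\Rightarrow$ (b).} Suppose $p(x)=\prod_{C\in\mathcal{C}}p_C(x_C)$, and let $S$ separate $A$ from $B$. Let $\tilde A$ be the set of nodes connected to $A$ by paths avoiding $S$, and put $\tilde B=V\setminus(\tilde A\cup S)\supseteq B$. A clique cannot meet both $\tilde A$ and $\tilde B$, since its two nodes would then be adjacent without passing through $S$. So every clique lies in $\tilde A\cup S$ or in $\tilde B\cup S$. Grouping the factors accordingly gives $p(x)=g(x_{\tilde A},x_S)\,h(x_{\tilde B},x_S)$. This factorisation yields $X_{\tilde A}\independent X_{\tilde B}\mid X_S$, and marginalising then gives $X_A\independent X_B\mid X_S$. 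Positivity is not needed in this direction.

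\emph{(b) $\Rightarrow$ (a).} This is the substantive direction, and I would follow the Möbius-inversion argument of \citet{Lauritzen96}. Fix a reference point $x^*$, and for $D\subseteq V$ write $x^D$ for the point with coordinates $x_D$ on $D$ and $x^*$ elsewhere. Define $H_D(x)=\log p(x^D)$, which is finite because $p>0$, and set $\phi_D(x)=\sum_{E\subseteq D}(-1)^{|D\setminus E|}H_E(x)$. Möbius inversion then gives $\log p(x)=H_V(x)=\sum_{D\subseteq V}\phi_D(x)$, and each $\phi_D$ depends only on $x_D$. It therefore suffices to show that $\phi_D\equiv 0$ whenever $D$ is not complete in $\mathcal{G}^m$. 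Grouping the nonzero $\phi_D$ into maximal cliques and exponentiating then produces (\ref{eq:undirected-factorisation}), with the normalisation absorbed into one factor.

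\emph{Main step: vanishing of $\phi_D$.} This is the step I expect to be the main obstacle. Take $i,j\in D$ non-adjacent, and let $S=V\setminus\{i,j\}$ and $W=D\setminus\{i,j\}$. Split the sum over $E\subseteq D$ into four sums over $E\subseteq W$, namely $E$, $E\cup\{i\}$, $E\cup\{j\}$ and $E\cup\{i,j\}$. This gives
\begin{align*}
\phi_D(x)=\sum_{E\subseteq W}(-1)^{|W\setminus E|}\bigl(H_{E\cup\{i,j\}}-H_{E\cup\{i\}}-H_{E\cup\{j\}}+H_E\bigr)(x).
\end{align*}
Since $i$ and $j$ are non-adjacent, $S$ separates them, and the Markov property gives $X_i\independent X_j\mid X_S$. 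By positivity the conditional densities exist, so $p(x_i,x_j,x_S)$ factorises as $p(x_i\mid x_S)\,p(x_j\mid x_S)\,p(x_S)$. Writing each $H$ term through this factorisation, evaluated at the points $x^{E\cup\cdots}$, which share the same $x_S$-coordinates $x^{E}_S$, the bracket becomes
\begin{align*}
\log\frac{p(x_i\mid x_S^E)\,p(x_j\mid x_S^E)\,p(x_i^*\mid x_S^E)\,p(x_j^*\mid x_S^E)}{p(x_i\mid x_S^E)\,p(x_j^*\mid x_S^E)\,p(x_i^*\mid x_S^E)\,p(x_j\mid x_S^E)}=0.
\end{align*}
Every bracket vanishes, so $\phi_D\equiv 0$. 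The care needed is in tracking which coordinates are set to $x^*$ and in confirming that the conditioning set is identical across the four terms. Pairwise Markov suffices for this computation, and global Markov implies pairwise Markov.
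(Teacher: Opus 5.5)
Your proposal is correct and is the standard proof found in the references the paper cites: the clique-grouping argument for (a)$\Rightarrow$(b), and the M\"obius-inversion argument of \citet{Lauritzen96} via the pairwise Markov property for (b)$\Rightarrow$(a); the paper itself gives no proof beyond these citations. The one point to tighten is that for continuous variables $X_i\independent X_j\mid X_S$ gives the density factorisation only almost everywhere, but you evaluate it at the $x^*$-modified points, which form a null set, so you should either work with discrete variables or take a continuous version of $p$ (which is why Lauritzen's statement assumes a positive and continuous density).
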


\begin{definition}{(Factorisation directed graph)}\label{def:factorisation-directed}
A directed graphs can be factorisation with respect to the parents (two-node cliques with child) of each node 
\begin{align}\label{eq:directed-factorisation}
p(x) = \prod_{i\in V} p_i(X_i\mid X_{\textsc{pa}(i)}),
\end{align}
where $\textsc{pa}(i)$ is the set of direct causes of node $i$. 
\end{definition}
The factorisation in (\ref{eq:directed-factorisation}) implies the factorisation for undirected graphs in (\ref{eq:undirected-factorisation}) on the moral graph but not vice versa \citep[][Lemma 3.21]{Lauritzen96}. 

A do-intervention can be defined on such a directed factorisation of $p(x^\star)$, with $x^\star=(x_1,\ldots,x_j^\star,\ldots,x_p)$ for $j\in A$ a subset of nodes $A$ of $V$, with particular values $(x_j^\star, j\in A)$. 
\begin{definition}{\citep[do-intervention density,][]{Pearl:2000,Lauritzen:2001}}\label{def:do-intervention-density}
In a do-intervention on nodes in $A\subset V$, the values of the nodes in $A$ are set to $x_j^\star$ and the factors $p(x^\star_j\mid \textsc{pa}(j))$ for $j\in A$ are removed from the factorisation 
\begin{align}
p(x^\star\mid \text{do}(x_A^\star))= \prod_{i\notin A} p_i(x^\star_i\mid X_{\textsc{pa}(i)}).
\end{align}
\end{definition}
This implies that none of the nodes that are parents of nodes in $A$ in the original graph $\mathcal{G}$, have control over the value of nodes in $A$ in the manipulated graph $\mathcal{G}^{\text{do}}$, where the edges into $A$ have been removed \citep{Spirtes:2000}. For example, in the regression context, an intervention on node $i$ such that $\text{do}(X_i=x_i^\star)$, gives 
\begin{align*}
X_i=x_i^\star   
\end{align*}
so that $X_i$ is degenerate and fixed and all influences from other nodes have been removed. 

In contrast, in a soft intervention the graph and its edges remain unchanged but the values of the intervened variables are changed \citep{Eberhardt:2007}. 
\begin{definition}{\citep[Soft intervention density,][]{Eberhardt:2007}}\label{def:soft-intervention} In a soft intervention on nodes $A\subset V$, the values of the nodes in $A$ are adjusted by additional intervention variables $U_i$ and the terms in the factorisation become $p(x_i^\star\mid x_{\sc pa(i)}, u_i)$ with $i\in A$. 
\end{definition}
For example, in the linear case we may obtain as a soft intervention on node $i$ such that $X_i$ is changed by some input $U_i=g(u_i)+r_i$, where $g$ is a continuous bounded function and $u_i\in \mathbb{R}$ and $r_i$ is uniform distributed on $[-1,1]$, for instance. Then the intervention result is
\begin{align*}
    X_i = \beta_0 +\beta_{ij}X_j + U_i +e_i
\end{align*}
This shows that previous influences on $X_i$ remain, but the change in $X_i$ is now also influenced by $U_i$.
%---------------------------------------------------------------------------
\subsection{Difference between regression and causal effects}\label{app:regression-causality}
To illustrate the difference between standard regression coefficients and causal effects, we compare the regression effects and the causal effects for the graph in Figure \ref{fig:causal-graph}(a). The edge weights represent the unique effect of the parent (direct cause). Suppose we are interested in determining the effect of nodes $s$, $u$ and $v$ on $t$. In regression, this effect is obtained by conditioning on all remaining variables. By contrast, in causal modeling we condition only on the parents of the nodes. In Figure \ref{fig:causal-graph}(b), (c) and (d) we see the paths that have a causal effect on $t$. We obtain for the regression and caual effects
\begin{align*}
    &\text{regression}         &\text{causal}  \\
    &\beta_{s\mid uv}= 2        &\beta_{s\mid \varnothing} =0.8 \\
    &\beta_{u\mid sv}=-1        &\beta_{u\mid s} = -1 \\
    &\beta_{v\mid su} = -1      &\beta_{v\mid s} = -1
\end{align*}    
The interesting difference is seen in the effect of $s$ on $t$, where the regression effect is 2 and the causal effect is 0.8. The reason is that in the causal effect, since $s$ does not have any parents, we do not condition on any other node. Hence we add each of the paths in Figure \ref{fig:causal-graph}(b), (c) and (d) to get 0.8. 
\begin{figure}
\begin{tabular}{c c}
\begin{tikzpicture}[-, >=stealth',shorten >=1pt,auto,node distance=2cm,
  thick,main node/.style={circle,fill=blue!10,draw,font=\footnotesize\sffamily}]
  \tikzstyle{sample} = [circle,fill=blue!20,draw,font=\footnotesize\sffamily]
  \tikzstyle{noSample} = [circle,fill=red!0,draw,font=\footnotesize\sffamily]
  \tikzstyle{sampleEdge} = [font=\sffamily\small,blue]
  \tikzstyle{noSampleEdge} = [node/.style={font=\sffamily\small},red]	
	
  \node[noSample] (v) {$v$};
  \node[noSample] (t) [below left of=v] {$t$};
  \node[noSample] (s) [below right of=v] {$s$};
  \node[noSample] (u) [below right of=t] {$u$};

 \path[sampleEdge,->]
    (s) edge [right] node [above, yshift=0em, xshift=0.4em] {$0.8$} (v)
    (s) edge [right] node [below, yshift=0em, xshift=0.4em] {$0.4$} (u)
    (v) edge [right] node [above, yshift=0em, xshift=-0.4em] {$-1$} (t)
    (u) edge [right] node [below, yshift=0em, xshift=-0.4em] {$-1$} (t)     
    (s) edge [right] node [above] {$2$} (t);

\end{tikzpicture}

&
\raisebox{4em}{
\begin{tikzpicture}[-, >=stealth',shorten >=1pt,auto,node distance=2cm,
  thick,main node/.style={circle,fill=blue!10,draw,font=\footnotesize\sffamily}]
  \tikzstyle{sample} = [circle,fill=blue!20,draw,font=\footnotesize\sffamily]
  \tikzstyle{noSample} = [circle,fill=red!0,draw,font=\footnotesize\sffamily]
  \tikzstyle{sampleEdge} = [font=\sffamily\small,blue]
  \tikzstyle{noSampleEdge} = [node/.style={font=\sffamily\small},red]	
	
  \node[noSample] (s) {$s$};
  \node[noSample] (v) [left of=s] {$v$};
  \node[noSample] (t) [left of=v] {$t$};

 \path[sampleEdge,->]
    (s) edge [right] node [above, yshift=0em, xshift=0.4em] {$0.8$} (v)
    (v) edge [right] node [above, yshift=0em, xshift=-0.4em] {$-1$} (t)
    (s) edge [bend right=50, dashed] node [above] {$(0.8)\cdot -1$} (t);

\end{tikzpicture}
}
\\[1em]
(a) &(b)

\\[1em]

\begin{tikzpicture}[-, >=stealth',shorten >=1pt,auto,node distance=2cm,
  thick,main node/.style={circle,fill=blue!10,draw,font=\footnotesize\sffamily}]
  \tikzstyle{sample} = [circle,fill=blue!20,draw,font=\footnotesize\sffamily]
  \tikzstyle{noSample} = [circle,fill=red!0,draw,font=\footnotesize\sffamily]
  \tikzstyle{sampleEdge} = [font=\sffamily\small,blue]
  \tikzstyle{noSampleEdge} = [node/.style={font=\sffamily\small},red]	
	
  \node[noSample] (s) {$s$};
  \node[noSample] (u) [left of=s] {$u$};
  \node[noSample] (t) [left of=v] {$t$};

 \path[sampleEdge,->]
    (s) edge [right] node [above, yshift=0em, xshift=0.4em] {$0.4$} (v)
    (v) edge [right] node [above, yshift=0em, xshift=-0.4em] {$-1$} (t)
    (s) edge [bend right=50, dashed] node [above] {$(0.4)\cdot -1$} (t);

\end{tikzpicture}

&
\begin{tikzpicture}[-, >=stealth',shorten >=1pt,auto,node distance=2cm,
  thick,main node/.style={circle,fill=blue!10,draw,font=\footnotesize\sffamily}]
  \tikzstyle{sample} = [circle,fill=blue!20,draw,font=\footnotesize\sffamily]
  \tikzstyle{noSample} = [circle,fill=red!0,draw,font=\footnotesize\sffamily]
  \tikzstyle{sampleEdge} = [font=\sffamily\small,blue]
  \tikzstyle{noSampleEdge} = [node/.style={font=\sffamily\small},red]	
	
  \node[noSample] (s) {$s$};
  \node[noSample] (t) [left of=s] {$t$};

 \path[sampleEdge,->]
    (s) edge [right] node [above, yshift=0em, xshift=0.4em] {$2$} (t);

\end{tikzpicture}
\\[1em]
(c) &(d)
\end{tabular}
\caption{Illustration of the difference between regression coefficients and causal effects. The regression coefficient of $s$ on $t$ is $2$ (keeping the other two paths fixed), whereas the causal effect is $0.8$ (see text for details). }
\label{fig:causal-graph}
\end{figure}
%

%---------------------------------------------------------------------------
\section{Markov chains and convergence to equilibrium}\label{app:markov-chains}
We have used material from \citet{Levin:2017} and \citet{Norris:1997} for this section. For further reading we recommend those books and the elegant book by \citet{Haggstrom:2002}. A Markov chain with states, $a$ and $b$ etc, i.e., the possible values for the random variable $X_n$ are $X_n=a$ or $X_n=b$ etc. The random variable is a function of discrete time $n$ and can change state from time $n$ to time $n+1$. This forms the Markov chain $(X_n, n\ge 0)$. The random variable changes state with, so-called, transition (conditional) probability 
\begin{align*}
\mathbb{P}(X_{n+1}=b\mid X_n=a)
\end{align*}
This leads to the transition matrix $P$ with all conditional probabilities, e.g., for a two state model with states $a$ and $b$ we obtain
\begin{align*}
P=
\begin{pmatrix}
\mathbb{P}(X_{n+1}=a\mid X_n=a) 	&\mathbb{P}(X_{n+1}=b\mid X_n=a) \\
\mathbb{P}(X_{n+1}=a\mid X_n=b) 	&\mathbb{P}(X_{n+1}=b\mid X_n=b) 
\end{pmatrix}
\end{align*}
Note that each row sums to 1. This implies that the first row is the conditional distribution of $X_{n+1}$ given $X_n=a$, and the second row is the conditional distribution of $X_{n+1}$ given $X_n=b$. The columns do not necessarily sum to 1. The matrix $P$ is often referred to as a stochastic matrix.

Multiplication of $P$ by itself, i.e., $PP=P^2$, gives the 2-step change.
We can continue the multiplication and writing out $P^n$ for any $n$. We take an initial distribution $\mu_0$ to start the process. For instance, if we start in $a$, then the initial distribution is $\mu_0=(1,0)$, where with probability 1 we start at $a$. Then we take steps by 
\begin{align*}
\mu_1=\mu_0 P\quad\text{and for any $n$}\quad \mu_{n}=\mu_0 P^n
\end{align*}
The question then arises whether at some point the difference between $\mu_n$ and $\mu_{n+1}$ becomes smaller than an arbitrarily small $\varepsilon$. In other words, whether the equation $\mu_n=\mu_0 P^n$ converges such that there is a stationary (equilibrium) distribution $\pi$ such that 
\begin{align*}
\pi=\pi P
\end{align*}
For finite state space Markov chains with non-changing transition probabilities such a stationary distribution $\pi$ often exists \citep[see Example 1.8.1 of][for a nice counterexample]{Norris:1997}. We can obtain the distribution for any $p,q$ in the interval $(0,1)$. 

For general state spaces of some dimension $s$ such easy analysis of stationary distributions is not possible. Let $S$ be a general (finite) state space and let for any states $a,b$ from $S$, $P^n(a,b)$ be the $n$-step transition probability from $a$ to $b$ (i.e., element $a,b$ from the matrix $P^n$). The convergence theorem \citep[][Theorem 4.9]{Robinson:1998} also called the convergence to equilibrium theorem \citep[][Theorem 1.8.3]{Norris:1997} states: if 
\begin{itemize}
\item[(1)] for any states $a$ and $b$ from the state space $S$ there is an $n$ such that $P^n(a,b)>0$ (irreducible), and 
\item[(2)] there exists an $n$ such that $P^n(a,b)>0$ for any states $a$ and $b$ from the state space $S$ (aperiodic)
\end{itemize}
then the difference (in total variation) between the stationary distribution $\pi$ and any row of the transition matrix $P$ tends to 0 at exponential rate. The fact that the convergence to equilibrium is at exponential rate means that often after only a few steps, the rows of the matrix $P^n$ become close to the stationary distribution.

%---------------------------------------------------------------------------
\section{Asymptotic properties of the Markov process}\label{app:asymptotic-properties-linear-Markov}
Here we show that the map
\begin{align*}
    X_{j,t+k} 
    % = f \left( P^kX_t + \sum_{s=0}^{k-1}P^s u \right)
        = f \Bigl( P_k(j,j)X_{j,t} + \sum_{i\ne j} P_k(i,j)X_{i,t} + \sum_{i\in V}C_k(i,j) u_{i,t} \Bigr)
\end{align*}
given in (\ref{eq:tanh-control}) converges. We show that ($i$) the linear map (i.e., $f=\text{id}$ the identity function) is ergodic \citep{Lind:1995,Brin:2002} in Lemma \ref{lem:asymptotically-uncorrelated}, and ($ii$) applying $f$ forms a contraction \citep[see e.g.,][]{Hirsch:2004,Broer:2011,Homburg:2024} of the ergodic map in Lemma \ref{lem:tanh-convergence}, and, hence, is also ergodic \citep{Brin:2002}, and so converges almost surely. The proviso is that we break up the domain on $\mathbb{R}$ into two domains around $0$, so that $[a,b]$ with $a>0$ it is positive or $b<0$ it is  negative. This leads to the bistability and to the possibility of switching from positive to negative or vice versa.

% Lemma dependent sequence convergence ------------------------------------------------------------------------------------------------
\begin{lemma}\label{lem:asymptotically-uncorrelated}
Let $X_{n}=P_nX_{n-1}+U_{n}$ be a linear time varying process with $U_n$ independent non-identically distributed with $\sup_n\mathbb{E}||U_n||<\infty$ and finite variance matrix $\text{var}(U_n)$. Furthermore, $U_n$ is independent of $P_n$, $\mathbb{E} P_n$ converges almost surely to $P$, which is Schur stable. Then $X_n$ is asymptotically uncorrelated and, hence, $\bar{X}_n$ converges almost surely to $\mathbb{E}\bar{X}_n$. 
\end{lemma}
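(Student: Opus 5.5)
Unrolling the recursion gives the solution as a weighted sum of the inputs, and the plan is to show that this sum forgets its past quickly enough for a strong law of large numbers to apply. With $\Phi(n,m)=P_nP_{n-1}\cdots P_{m+1}$ (and $\Phi(n,n)=I$),
\begin{align*}
X_n=\Phi(n,0)X_0+\sum_{m=1}^{n}\Phi(n,m)U_m .
\end{align*}
The argument then proceeds in three steps: (a) the transition products decay exponentially in mean; (b) covariances of $X_n$ decay exponentially in the time lag; (c) a strong law for weakly dependent sequences gives the almost sure convergence of $\bar X_n$.

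\emph{Step (a).} Since $P$ is Schur stable, its spectral radius is below one. Hence there is a norm $\|\cdot\|_*$ and a $\rho<1$ with $\|P\|_*\le\rho$. Because $\mathbb{E}P_n\to P$, for $n\ge N$ we get $\|\mathbb{E}P_n\|_*\le\rho'<1$. The difficulty is that $\mathbb{E}\Phi(n,m)$ is not the product of the $\mathbb{E}P_k$ unless the $P_k$ are independent of one another. I would add this as a standing interpretation: the $P_n$ are independent across $n$, or deterministic. The latter is the case actually used in the paper, where $P_n=P$. Under that reading,
\begin{align*}
\|\mathbb{E}\Phi(n,m)\|_*\le C\rho'^{\,n-m},
\end{align*}
and the same bound holds for second moments if we also assume $\sup_n\|P_n\|$ is bounded and $\mathbb{E}(P_n\otimes P_n)$ converges to a Schur stable limit. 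This is the main obstacle. The statement as written only controls $\mathbb{E}P_n$, which by itself does not control $\mathbb{E}\|\Phi(n,m)\|$. My first attempt would be the deterministic or independent case together with a uniform bound $\|P_n-P\|\le\epsilon_n\to0$, which makes $\|\Phi(n,m)\|_*\le C\rho'^{\,n-m}$ hold pathwise.

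\emph{Step (b).} The $U_m$ are independent of each other and of the $P$'s, and $\sup_m\operatorname{var}(U_m)$ is finite, which I read as part of the hypothesis of finite variance matrices. For $k\ge0$ the cross terms in $U$ then vanish, giving
\begin{align*}
\operatorname{cov}(X_{n+k},X_n)=\Phi(n+k,n)\operatorname{var}(X_n)
\end{align*}
in the deterministic case, and the analogous expectation otherwise. The diagonal terms satisfy
\begin{align*}
\operatorname{var}(X_n)\le \sum_{m}\|\Phi(n,m)\|^2\sup_m\|\operatorname{var}U_m\| ,
\end{align*}
which is uniformly bounded by the geometric decay from (a). Therefore $|\operatorname{cov}(X_{n+k},X_n)|\le C'\rho'^{\,k}$, which is the sense in which $X_n$ is asymptotically uncorrelated. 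The means $\mathbb{E}X_n$ stay bounded by the same geometric-sum argument, since $\sup_n\mathbb{E}\|U_n\|<\infty$.

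\emph{Step (c).} Apply a strong law for dependent sequences to the centred variables $Y_n=X_n-\mathbb{E}X_n$: if $\sup_n\operatorname{var}(Y_n)<\infty$ and $|\operatorname{cov}(Y_n,Y_{n+k})|\le c\,\rho^k$, then $\bar Y_n\to0$ almost surely. One option is the Lyons theorem; another is the direct route. For the direct route, first bound
\begin{align*}
\operatorname{var}(\bar Y_n)\le \frac{C}{n}
\end{align*}
by summing the geometric covariance bounds. Along the subsequence $n=j^2$, Chebyshev's inequality and Borel--Cantelli give almost sure convergence. The gaps between squares are then filled using the $L^2$ bound on the maximal fluctuation inside each block. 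This yields $\bar X_n-\mathbb{E}\bar X_n\to0$ almost surely, which is the claim.
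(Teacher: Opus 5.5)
Your proposal is correct under the hypotheses you make explicit. Its skeleton is the paper's: unroll the recursion, use Schur stability of $P$ to get geometric decay of the propagators, deduce decaying dependence between $X_n$ and $X_{n+h}$, and finish with a strong law for asymptotically uncorrelated sequences. The differences lie in what each step actually justifies.

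\emph{The added hypotheses are necessary.} The obstacle you flag in step (a) is real, and the lemma as literally stated is false. Take scalar i.i.d.\ $P_n=\pm 2$ with probability $1/2$ each. Then $\mathbb{E}P_n=0$, which is Schur stable, yet $\mathbb{E}X_n^2=4\,\mathbb{E}X_{n-1}^2+\mathbb{E}U_n^2$ grows like $4^n$ and $\bar X_n$ does not converge. The uniform bound on $\text{var}(U_n)$ that you read into the hypothesis is also necessary. With $P_n=0$ and independent $U_n=\pm n$ with probability $1/(2n)$ each (and $0$ otherwise), $\sup_n\mathbb{E}|U_n|=1$ and every variance is finite. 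Yet $U_n/n=\pm1$ infinitely often, so $\bar X_n\not\to 0$.

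\emph{The paper's proof does not avoid these issues.}
\begin{itemize}
\item It adds, beyond the stated hypotheses, that the $P_n$ are independent with $\mathbb{E}P_n=P$ exactly.
\item Its identity $\text{var}(K_{n+h-s}U_s)=P^{n+h-s}\text{var}(U_s)(P^\top)^{n+h-s}$ holds only for deterministic $P_n\equiv P$, the case you single out as the one actually used. For random $P_n$, second moments of the products enter, which is exactly what your condition on $\mathbb{E}(P_n\otimes P_n)$ controls.
\item Its step ``each $P_n$ is Schur stable, so $\|K_r\|$ is decreasing'' is neither assumed nor valid. A nonzero nilpotent matrix is Schur stable but can have any norm, and products of Schur stable matrices need not be stable. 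Your adapted norm, with $\|\mathbb{E}P_n\|_*\le\rho'$ for $n\ge N$, is the correct way to use $\mathbb{E}P_n\to P$.
\item It uses the uniform bound on $\text{var}(U_n)$ tacitly, through its constant $\beta$.
\end{itemize}

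Two further differences favour your version.

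\emph{Covariance versus normalized correlation.} You prove decay of $\text{cov}(X_{n+k},X_n)$ rather than the paper's normalized bound $\|\text{cov}(X_n,X_{n+h})\|\le\varrho_h\|\text{var}(X_n)\text{var}(X_{n+h})\|^{1/2}$. The covariance form together with bounded variances is exactly what the strong law needs. The normalized form can fail under the stated hypotheses: scalar $P_n\equiv 1/2$, $U_1$ nondegenerate and $U_n=0$ for $n\ge 2$ makes every correlation equal to one. Moreover, the paper's choice $\varrho_h=\beta_2\lambda^{2h}$ is not admissible, because $\beta_2$ depends on $n$.

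\emph{The strong-law step.} The paper's final step bounds $\sum_j\mathbb{E}\|X_j-\nu_j\|^2$ and then asserts $\bar X_n=\bar\nu$ almost surely; this does not establish almost sure convergence. Your argument does: Chebyshev and Borel--Cantelli along $n=j^2$, with the within-block maximum controlled (writing $S_m=\sum_{i\le m}Y_i$) by $\mathbb{E}\max_{j^2<m<(j+1)^2}|S_m-S_{j^2}|^2\le(2j+1)^2\sup_i\mathbb{E}Y_i^2$. This is the standard correct proof of the strong law the paper is implicitly invoking.
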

\begin{proof}
Asymptotically uncorrelated is defined as follows \citep[][Definition 3.55]{White01}: $X_n$ is asymptotically uncorrelated if for the covariance of $X_n$ and its shifted version $X_{n+h}$ for $h>0$
\begin{align*}
||\text{cov}(X_{n},X_{n+h})||\le \varrho_h ||\text{var}(X_n)\text{var}(X_{n+h})||^{1/2}
\end{align*}
for $0\le \varrho_h\le 1$ and $\sum_{h=0}^\infty \varrho_h<\infty$ ($\varrho_h$ is an upper bound for the correlation). This implies that the $\varrho_h$ decrease fast enough, and so the correlation decreases to 0 since $\varrho_h$ is an upper bound for the correlation. 

For the variance of $X_n$ we see that for each $n$ and $h\ge 0$
\begin{align*}
\text{var}(X_{n+h}) = \sum_{r=0}^{n+h-1}\text{var}(K_rU_{n+h-r})
\end{align*}
because the $U_n$ are independent. Because the $P_n$ are independent and is also inpendent of the $U_n$, and $\mathbb{E} P_n=P$, we obtain $\mathbb{E}(K_r U_{n-r})=P^r\mu$. Furthermore, changing the index $r \mapsto n+h-s$ for $s\ge 0$ gives
\begin{align*}
\text{var}(K_{n+h-s}U_{s})
%=
%\mathbb{E}_{Y,U}(K_{r}U_{n-r} - \mathbb{E}(K_rU_{n-r}))(K_{r}U_{n-r} - \mathbb{E}(K_rU_{n-r}))^\top
=
P^{n+h-s}\text{var}(U_{s})(P^\top)^{n+h-s}
\end{align*}
Then with $\lambda$ the largest eigenvalue of $P$ for this variance we obtain that 
$$||\text{var}(K_{n+h-s}U_{s}))||\le \beta \lambda^{2(n+h-s)},$$
for some finite $\beta>0$. Because $P$ is Schur stable and $\text{var}(U_n)$ is finite for each $n$, this implies that $\text{var}(X_n)$ converges a.s. and we obtain 
\begin{align*}
||\text{var}(X_n)\text{var}(X_{n+h})||^{1/2} \le n\beta_1\lambda^{2(2n+h)}
%\to \beta\frac{1}{1+\lambda^{2}}
\end{align*}
for some finite $\beta_1>0$. Clearly, since $0\le \lambda <1$, we can take an upper bound for each $h$, $\varrho_h=\beta_2\lambda^{2h}$ with $\beta_2=n\beta_1\lambda^{4n}$, such that $\sum_h \varrho_h=\beta_2/(1-\lambda^2)<\infty$ 
%with an upper bound for the rate $n\beta\lambda^{2n}$. 
%For the covariance of $X_n$ and its shifted version $X_{n+h}$ for $h>0$, recalling that the $U_n$ are independent, we have
%%
%\begin{align*}
%\text{cov}(X_n,X_{n+h}) 
%%=
%%\text{cov}(P_nX_{n-1}+U_{n},P_{n+h}X_{n-1+h}+U_{n+h})
%%=
%%\text{cov}(\sum_{r=0}^{n-1}P_rU_{n-r},\sum_{s=0}^{n-1}P_{s+h}U_{n-s+h})
%=
%\text{cov}(\sum_{r=0}^{n-1}K_rU_{n-r},\sum_{s=0}^{n-1}K_{s+h}U_{n-s+h})
%%&=
%%\text{cov}(U_{n} + K_1U_{n-1}+\cdots,K_{h}U_{n+h}+K_{1+h}U_{n-1+h}+\cdots)\\
%&=
%\sum_{r=h}^{n-1}\text{var}(K_{h:r}U_{n-r})
%\end{align*}
%%
%where $K_{h:r}=P_hP_{h+1}\cdots P_{r}$. 
%Because the covariance can be expressed in terms of the variance, we easily see that 
%%Let the correlation function be defined as  $R_n(h) = D^{-1/2}\text{cov}(X_n,X_{n+h})D^{-1/2}$, where $D$ is the diagonal matrix $\text{diag}(\text{var}(X_n))$. Then 
%%%
%%\begin{align*}
%%\text{cor}(X_{n,i},X_{n+h,j})
%%%=\frac{\text{cov}(X_{n,i},X_{n+h,j})}{ (\text{var}(X_{n,i})\text{var}(X_{n+h,j}))^{1/2}}
%%=\frac{\sum_{r=h}^{n-1}\text{var}(K_{h:r}U_{n-r})_{ij}}{ (\sum_{r=0}^{n-1}\text{var}(K_rU_{n-r})_{ii}\sum_{r=0}^{n-1}\text{var}(K_rU_{n-r})_{jj})^{1/2}}
%%\end{align*}
%%%
%%$(R_n(0))_{ij} \le (R_n(h))_{ij}/h$ for $h>0$ and $i\ne j$. 
%Hence, the correlations decrease at least at rate $1/h$. Therefore, we can take any upper bound $\varrho_{h}=1/h^{1+\delta}$, for some $\delta>0$, such that $\sum_{h}\varrho_h<\infty$. 

We assumed that $\sup_n\mathbb{E}||U_n||$ is finite, and so we have that
\begin{align*}
\mathbb{E}||X_n||
%=\mathbb{E}P_n\mathbb{E}X_{n-1}+\mathbb{E}U_n=P\mathbb{E}X_{n-1} + \mu_n
\le \mathbb{E}||K_n x_0|| + \mathbb{E}||\sum_{r=0}^{n-1}K_r U_{n-r}||
%\le \sup_n \mathbb{E}||U_n|| \sum_{r=0}^{n-1}\mathbb{E}||K_r||
\end{align*}
where $K_r=P_0 P_1\cdots P_r$. Because each $P_n$ is Schur stable, $||K_r||$ is a decreasing sequence in $r$. Therefore,  for $n\ge n_0$
\begin{align*}
%\mathbb{E}||X_n||
%=\mathbb{E}P_n\mathbb{E}X_{n-1}+\mathbb{E}U_n=P\mathbb{E}X_{n-1} + \mu_n
\mathbb{E}||K_n x_0|| + \mathbb{E}||\sum_{r=0}^{n-1}K_r U_{n-r}||
\le \sup_n \mathbb{E}||U_n|| \sum_{r=0}^{n-1}\mathbb{E}||K_r||
\end{align*}
and so, the expectation $\mathbb{E}||X_n||$ is finite and $\mathbb{E}X_n$ converges almost surely. We obtain for all $n\ge n_0$
\begin{align*}
\mathbb{E} X_n
=\mathbb{E}P_n\mathbb{E}X_{n-1}+\mathbb{E}U_n=P^n x_0 + \sum_{r=0}^{n-1}P^r\mu_{n-r}
\end{align*}
%
%Then 
%%
%\begin{align*}
%X_j - \mathbb{E} X_j
%=(K_j-P^j )x_0 + \sum_{r=0}^{n-1}(K_rU_{n-r}-Q^r\mu_{n-r})
%\end{align*}
%%
%The term $K_j-Q^j$ converges to 0, since both $K_j$ and $Q^j$ do. 
Let $Y_j=\sum_{r=0}^{j-1}K_rU_{n-r}$ and $\nu_j= \sum_{r=0}^{j-1}P^r\mu_{n-r}$. Then $\mathbb{E}(Y_j-\nu_j)=0$ for each $j$ and 
\begin{align*}
\sum_{j=1}^n \mathbb{E}||X_j-\nu_j||^2 
\le \sum_{j=1}^{n}\text{tr}( \text{var}(X_j)) 
\le n\beta_1\lambda^{4n}
\end{align*}
Hence, we obtain for all $n\ge n_0$ that 
\begin{align*}
\bar{X}_n=\frac{1}{n}\sum_{j=1}^n X_j  = \frac{1}{n}\sum_{j=1}^n \nu_j=\bar{\nu}
\end{align*}
almost surely.
\end{proof}
%
%-----------------------------------------------------------------------------------------------------------------------------------------------------------

%
\begin{lemma}\label{lem:tanh-convergence}
The map $f:[a,b]\to[-1,1]$ with $x\mapsto\tanh(x)$ and either $[a,b]$ with $a>0$ or $b<0$, with any (possibly stochastic) linear function for $x$ and the assumptions in Lemma \ref{lem:asymptotically-uncorrelated}, converges uniformly almost surely to either one of two fixed points of $f$ on $[a,b]$ with $a>0$ or $b<0$. 
\end{lemma}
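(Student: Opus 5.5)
Restrict to one sign domain, say $[a,b]$ with $a>0$; the case $b<0$ follows from the symmetry $\tanh(-x)=-\tanh(x)$. The scalar prototype is the map $g(x)=\tanh(\beta\gamma x+\tau)$, with $\beta\gamma>1$ so that bistability holds, as in Figure \ref{fig:tanh}. Here $\tau$ stands for the external field, given by the (possibly stochastic) linear combination $\sum_{i\ne j}P(i,j)X_{i,t}$ plus the target term $\sum_i C_k(i,j)u_{i,t}$. By Lemma \ref{lem:asymptotically-uncorrelated}, this linear part has averages that converge almost surely to a deterministic limit $\tau^\ast$. The strategy is therefore to treat the nonlinear recursion as a perturbation of the frozen map $g^\ast(x)=\tanh(\beta\gamma x+\tau^\ast)$, and then show convergence to a fixed point of $g^\ast$.

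First, identify the stable fixed point on the positive domain and an invariant interval around it. Since $g^\ast$ is increasing and bounded by $1$, and $g^\ast(a)>a$ for $a$ beyond the unstable crossing, $g^\ast$ maps $[a,1]$ into itself. On this interval there is a unique fixed point $x^\ast$, the upper stable point.

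Second, establish the contraction. We have $(g^\ast)'(x)=\beta\gamma(1-\tanh^2(\beta\gamma x+\tau^\ast))$. The derivative is decreasing in $x$ on the positive side, and at $x^\ast$ it is strictly less than $1$ because $x^\ast$ is a stable crossing. So I will shrink the domain to $[a',1]$ with $a'>$ the unstable point chosen so that $\sup_{[a',1]}|(g^\ast)'|\le L<1$. Taking $a$ large enough in the statement gives this directly. By the mean value theorem, $g^\ast$ is then an $L$-contraction on a complete invariant set. Banach's fixed point theorem gives $|g^{\ast n}(x)-x^\ast|\le L^n|x-x^\ast|$ uniformly in $x\in[a',1]$.

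Third, handle the time-varying, random field. Write $\tau_t=\tau^\ast+\varepsilon_t$. Since $\tanh$ is $1$-Lipschitz,
\begin{align*}
|X_{t+1}-x^\ast|\le L|X_t-x^\ast|+|\varepsilon_t|.
\end{align*}
Iterating gives $|X_n-x^\ast|\le L^n|X_0-x^\ast|+\sum_{r<n}L^{n-1-r}|\varepsilon_r|$. A discrete Toeplitz argument shows the convolution tends to $0$ whenever $\varepsilon_t\to0$ almost surely. The first term is uniform in the initial value on the domain, so the convergence is uniform almost surely.

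The main obstacle is this third step. Lemma \ref{lem:asymptotically-uncorrelated} only gives convergence of averages $\bar X_n$, not pointwise convergence $\varepsilon_t\to0$. Moreover, the random field could push the trajectory out of $[a',1]$ across the unstable point, which is exactly the switching the paper mentions. I would first try the additional assumption $\sum_t|\varepsilon_t|<\infty$ almost surely, or $|\varepsilon_t|\to0$ almost surely, which the variance bound $\beta\lambda^{2(n+h-s)}$ in that proof suggests, via Borel--Cantelli applied to the geometric variances. Invariance would then hold eventually: after some random time, $|\varepsilon_t|<(1-L)(1-a')$, and then the domain is preserved. If only Cesàro convergence is available, I would instead prove convergence of the averaged trajectory, again using the Toeplitz bound with Cesàro weights.
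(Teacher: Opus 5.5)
Your first two steps take a different and more careful route than the paper's proof. The paper's argument rests only on the non-strict bound $|\tanh x-\tanh y|\le|x-y|$, ``the contraction theorem'', and boundedness. But a $1$-Lipschitz map is not a contraction, and plain $\tanh$ does not map $[a,b]$ with $a>0$ into itself (its only fixed point is $0$). Your gain $\beta\gamma>1$, the invariant interval $[a',1]$ with $\sup_{[a',1]}(g^\ast)'\le L<1$, and Banach are the right repair for a \emph{fixed} field. One small fix: $(g^\ast)'$ is decreasing only to the right of the inflection point $-\tau^\ast/(\beta\gamma)$, not on ``the positive side''. So $a'$ must lie right of that point and close enough to $x^\ast$ that $(g^\ast)'(a')<1$.

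The genuine gap is step 3, and your proposed patches do not close it. First, in the paper's model the field $\tau_t=\sum_{i\ne j}P(i,j)X_{i,t}$ is built from the other coordinates of the \emph{same} nonlinear recursion, not from the linear process of Lemma \ref{lem:asymptotically-uncorrelated}. So ``by Lemma \ref{lem:asymptotically-uncorrelated} the field converges to $\tau^\ast$'' is circular: $\tau^\ast$ presupposes the limits you are trying to establish. For the coupled deterministic system, a non-circular argument is a joint contraction on an invariant box of sign-definite intervals. That means the Jacobian, with entries $\beta(1-X_{j,t+1}^2)P(i,j)$, must have maximal row sum below $1$ on the box.

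Second, suppose the field is taken to be an exogenous process satisfying Lemma \ref{lem:asymptotically-uncorrelated}. Those hypotheses still do not yield $\varepsilon_t\to0$. The bound $\beta\lambda^{2(n+h-s)}$ only says that the innovation injected at time $s$ is forgotten geometrically. The newest innovation enters undamped: for that linear process, $\text{var}(X_n)\succeq\text{var}(U_n)$, and nothing forces $\text{var}(U_n)\to0$. So Borel--Cantelli has nothing to act on. The lemma allows, for example, i.i.d.\ Gaussian $U_n$. Then the argument of $\tanh$ gets a fresh independent Gaussian kick at every step. The trajectory is pushed out of $[a',1]$, indeed across the unstable point, infinitely often with probability one, so $X_t$ cannot converge almost surely to any fixed point. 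Hence the statement, with only the hypotheses of Lemma \ref{lem:asymptotically-uncorrelated}, is false, and no argument can complete your step 3.

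The Ces\`aro fallback fails for the same reason. Averaging your recursion gives $\frac1N\sum_{n\le N}|X_n-x^\ast|\le O(1/N)+\frac{1}{1-L}\cdot\frac1N\sum_{r<N}|\varepsilon_r|$. This needs \emph{absolute} averages of $\varepsilon_r$ to vanish, whereas Lemma \ref{lem:asymptotically-uncorrelated} concerns signed averages. The recursion itself also presupposes the invariance that persistent noise destroys. At best you get an $O(\mathbb{E}|\varepsilon|/(1-L))$ neighbourhood of $x^\ast$.

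What your argument genuinely proves is a modified lemma. It requires $a$ restricted to the contraction region and an added hypothesis, such as a deterministic convergent field or $\varepsilon_t\to0$ almost surely. The invariance threshold must also be corrected to $|\varepsilon_t|<g^\ast(a')-a'$, which is at least $(1-L)(x^\ast-a')$; your $(1-L)(1-a')$ can be too large. You should present the result in that form. The paper's own one-line argument does not bridge this gap either.
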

\begin{proof}
We take $a>0$ so that we consider the positive domain; for $b<0$ the proof is nearly identical. 

The map $f(x)=\tanh(x)$ is $1$-Lipschitz continuous, since for all $x,y\in [a,b]$, $|\tanh'(x)|=|1-\tanh^2(x)|\le 1$. Hence, we obtain contraction
\begin{align*}
| \tanh(x) - \tanh(y) | \le |x-y|.
\end{align*}
Because the mapping is recursive, i.e., $f^n(x_0)=f(f(\cdots f(x_0)))$, $n$ times, we can apply the contraction theorem \citep[e.g.][]{Broer:2011,Homburg:2024}; and by the assumptions in Lemma \ref{lem:asymptotically-uncorrelated}, all values are bounded. Hence, we obtain convergence almost surely. 
\end{proof}
\begin{remark}
A matrix $P$ is Schur stable if the absolute value of the maximal eigenvalue is $<1$. One way to assure that $P$ (with positive entries) is Schur stable is to require that \citep[][Exercise 8.3.7(b)]{Meyer:2000}
\begin{itemize}
\item[($1$)] all rows of $P$ sum to at most 1 and at least one row sums to $<1$ (substochastic), and 
\item[($2$)] the underlying network of $P$ is such that each node in the network can reach the node which has sum $<1$ (irreducible). 
\end{itemize}
These conditions together result in a transition matrix $P$ that has eigenvalues that are in the interval $(-1,1)$, and so are Schur stable.
\end{remark}
\begin{remark}\label{rem:control-theory-target}
We obtain the result for $k$ finite time steps that $X_n=P^n x_0 + (I-P^k)(I-P)^{-1}u$, where $P$ is the transition matrix, (used in (\ref{eq:tanh-control})) from the result \citep[][Lemma 2.1]{Galor:2007}
\begin{align*}
\sum_{n=0}^{k-1}P^n=(I-P^k)(I-P)^{-1}
\end{align*}
This is how we obtain the elements $C_k(i,j)=[(I-P^k)(I-P)^{-1}]_{ij}$ in (\ref{eq:tanh-control}) and above. So, the target $u$ set in control theory is achieved in $k$ steps by the given equation. 
\end{remark}
%
%---------------------------------------------------------------------------
\section{Distance measures in linear Markov processes}\label{app:distance-measures-linear-markov}
We briefly discuss two types of distances defined on the linear processes (as in \ref{eq:markov-dynamics})) that are related to what we use in the non-linear setting. 

The first is to determine which node is most effecient to start with in an intervention. We therefore consider the distance to get from node $i$ to node $j$ in time $t$, and then which node has the strongest influence. The distance between nodes $i$ and $j$ considering all paths between them is \citep{Coifman:2006}
\begin{align}
    D_n^2(i,j)=\max_{t\in T}||P_t(i,\cdot) - P_t(j,\cdot)||^2=
        \max_{t\in T}\sum_{k\in V}(P_t(i,k) - P_t(j,k))^2
\end{align}
The distance measure can be interpreted as the largest impact of node $i$ on $j$ over any of the time points in range $T$, considering the different causal paths between them. We do this for all nodes and then the largest impact it has on all other nodes could be seen as a good candidate to start with the interventions. The distance across all nodes is then defined as 
\begin{align}\label{eq:node-distance}
    D_n(V) = \max_{i\in V}\left\{\max_{t\in T}\sum_{k\in V}(P_t(i,k) - P_t(j,k))^2\right\}
\end{align}
where we maximise over all nodes to determine the largest influence. 

Our second distance is about the difference between an unchanged kernel and a kernel with intervention (changing a connection or influence). 
To determine whether the intervention on a connection between nodes will make a difference and at what times, we use a Markov distance between the non-intervened and the intervened kernels, referred to as $K$ and $\tilde{K}$, respectively. We draw inspiration a metric that uses a continuous-time Markov chain \citep{Hammond:2013}, which we apply to a discrete-time Markov chain. The basic idea is that for each time $t$ we determine the difference between $P^t$ and $\tilde{P}^t$ and determine the Frobenius norm between them; then we determine the maximal difference in this norm. That is, the graph diffusion distance is defined as
\begin{align}\label{eq:distance}
    D_c^2(P,\tilde{P}) =\max_{t\in T}||P^t-\tilde{P}^t||^2=\max_{t\in T}\sum_{i,j\in V}(P_t(i,j)-\tilde{P}_{t}(i,j))^2
\end{align}
where $T$ is a finite subset of $\mathbb{N}$, and $P_t(i,j)$ is the $ij$th element of $P^t$. The distance $D_c$ can be interpreted as the maximum difference between the paths using either the original $P$ or the intervened version $\tilde{P}$.

%---------------------------------------------------------------------------
\section{Details of analyses in \texttt{R}}\label{app:analyses-R}
We provide the details for obtaining the causal graph, the causal effects, the Markov process, and the intervention metrics. 

\subsection{Causal graph and causal effects}
The algorithm FCI is part of the package \texttt{pcalg}. 
We used the assumption of Gaussian variables for the covariance matrix in (\ref{eq:covariance-time-series}). The test level was set at $0.05$ and we set $B=0$. One motivation for this is that there are several hours in between each measurement, so that each observation can be considered a time-integrated recollection of the period. An example of a causal graph estimated with one lag (i.e., an AR(1) model) is shown in Figure \ref{fig:network-p2-lag1}.
\begin{figure}[t]
\begin{tabular}{c @{\hspace{5em}} p{0.4\textwidth}}
\raisebox{-10em}{
\begin{tikzpicture}[-, >=stealth',shorten >=1pt,auto,node distance=2cm,
  thick,main node/.style={circle,fill=blue!10,draw,font=\footnotesize\sffamily}]
  \tikzstyle{sample} = [circle,fill=blue!20,draw,font=\footnotesize\sffamily]
  \tikzstyle{noSample} = [circle,fill=red!0,draw,font=\footnotesize\sffamily]
  \tikzstyle{sampleEdge} = [font=\sffamily\small,blue]
  \tikzstyle{noSampleEdge} = [node/.style={font=\sffamily\small},red]	
	
  \node[noSample] (anx) [] {$anx$};
  \node[noSample] (gui) [below left of=anx] {$gui$};
  \node[noSample] (str) [below right of=anx] {$str$};
  \node[noSample] (con) [right of=str] {$con$};
  \node[noSample] (sad) [below right of=gui] {$sad$};
  \node[noSample] (irr) [right of=sad] {$irr$};
  \node[noSample] (avo) [right of=irr] {$avo$};  
  \node[noSample] (wor) [left of=sad] {$wor$};

 \path[sampleEdge,<->]
    (anx) edge [right] node [xshift=0.2em] {} (str)  %above, yshift=0em, xshift=0.4em
    (str) edge [right] node [above] {} (con)
    (con) edge [right] node [] {} (avo)
    (irr) edge [right] node [below] {} (avo)
    (irr) edge [right] node [below] {} (avo)
    (irr) edge [right] node [below] {} (sad)
    (sad) edge [right] node [above] {} (con)
    (str) edge [right] node [left, yshift=0.5em, xshift=0.3em] {} (gui)
    (irr) edge [right] node [xshift=0.2em] {} (gui)
    (wor) edge [bend right] node [below] {} (avo);

\path[sampleEdge,->]
    (irr) edge [right] node [] {} (str)
    (gui) edge [right] node [left] {} (anx);

\end{tikzpicture}
}
&
\vspace{-3em}
\begin{itemize}
\item[$avo =$] avoidance 
\item[$wor =$] worry 
\item[$anx =$] anxious 
\item[$str =$] stressed 
\item[$con =$] concentration
\item[$gui =$] guilty
\item[$sad =$] sad
\item[$irr =$] irritability
\end{itemize}

\end{tabular}
\caption{Graph of the eight variables; result of applying FCI to the time series data where $B$ shifts one lag. A directed edge means there is evidence for a causal relation, and a bidirected edge means there is possibly a confound. The numbers on the edges refer to the estimated causal effects (see main text for explanation).}
\label{fig:network-p2-lag1}
\end{figure}

The causal effects were estiamted with the \texttt{ida} \citep[intervention calculus when DAG is absent][]{Maathuis:2010}. The causal graph obtained with FCI was given as input to \texttt{ida} with the value of 0.2. We estimated the multiset (i.e., a set with possible multiplicity of the same values) of effects in (\ref{eq:average-causal-effect}), which uses the assumption of Gaussian variables. For each node separately we estimated the multiset on all of the other nodes. Then we chose the smallest value in the multiset. 

%----------------------
\subsection{Markov process}
The linear Markov process using the transition matrix $P_\mathcal{G}$ was obtained from the kernel $K_\mathcal{G}$ in (\ref{eq:kernel-transition}) described in Section \ref{sec:nonlinear-dynamic-markov-process}. The coefficients were then given their original sign after obtaining $P_\mathcal{G}$.

To generate the time series we used the function in (\ref{eq:tanh-control})
\begin{align}
    X_{j,t+k} 
    % = f \left( P^kX_t + \sum_{s=0}^{k-1}P^s u \right)
        = f \Bigl( \alpha P_k(j,j)X_{j,t} + \gamma\sum_{i\ne j} P_k(i,j)X_{i,t} + \sum_{i\in V}C_k(i,j) u_{i,t} \Bigr)
\end{align}
where $P_k(i,j)=(P^k_\mathcal{G})_{ij}$ and $C_k(i,j)=[(I-P^k_\mathcal{G})(I-P_\mathcal{G})^{-1}]_{ij}$, and $\alpha=\gamma=2$. This setting ensured is similar to setting the inverse temperature in the Ising mean field approximation to 2, leading to bistability. An example of this process with a transient intervention on two nodes is shown in Figure \ref{fig:intervention-35}.
\begin{figure}\centering
    \pgfimage[width=\textwidth]{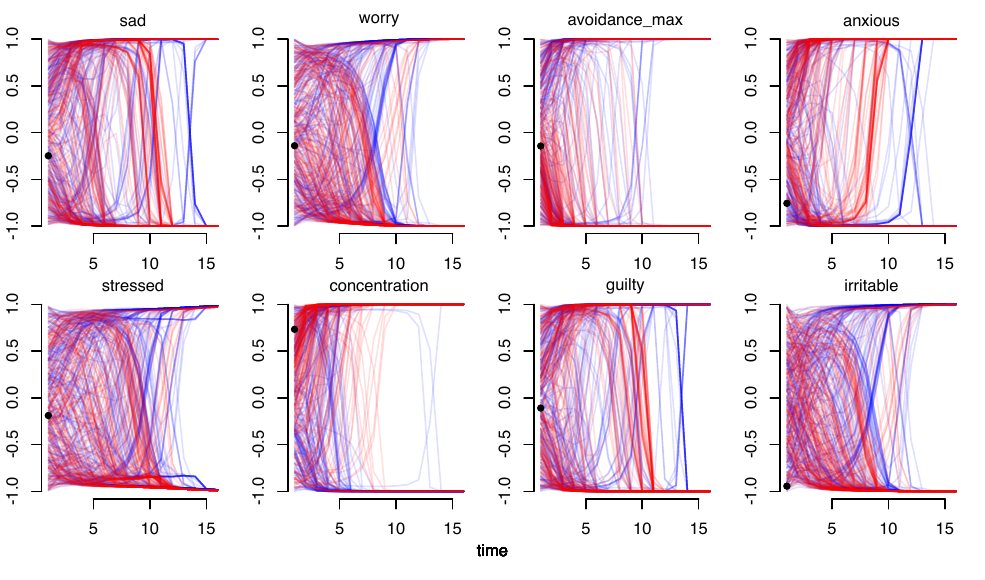}
\caption{ A transient intervention on variables \textit{avoidance} and \textit{concentration} simultaneously. Settings are otherwise the same as in Figure \ref{fig:intervention-stress}. }
\label{fig:intervention-35}
\end{figure}

The Gibbs free energy is obtained from the mean field approximation of the Ising model \citep{Plischke:1994}. Assuming that the correlations between the nodes is relatively weak, we obtain 
\begin{align*}
    G(x) = c_1 x^2 - c_2\log(2\cosh(2c_3x+\gamma))
\end{align*}
where $c_1$, $c_2$ and $c_3$ are constants, and $\gamma$ represents the coupling. We used $c_1=c_2=c_3=1$. The landscape plots in Figure \ref{fig:free-energy-over-time} were then created with input for $x$ the interval $[-3,3]$ and for $\gamma=X_{t+k}$ for different $k$. 

\subsection{Intervention metrics}
The first metric is the average of each of the control and intervention condition separately, at the last time point. The difference between these averages determines the effect size. The second metric is determined by the absolute difference between the averages of control and intervention conditions, for each time and variable separately. Then, the effect is summed across all time points and all variables, and this gives the sum effect.

%----------------------------------
\section{Graphs of five patients}
\label{app:results-5patients}

\begin{figure}[t]
\begin{tabular}{c @{\hspace{2em}} p{0.4\textwidth}}
\raisebox{-10em}{
\begin{tikzpicture}[-, >=stealth',shorten >=1pt,auto,node distance=2cm,
  thick,main node/.style={circle,fill=blue!10,draw,font=\footnotesize\sffamily}]
  \tikzstyle{sample} = [circle,fill=blue!20,draw,font=\footnotesize\sffamily]
  \tikzstyle{noSample} = [circle,fill=red!0,draw,font=\footnotesize\sffamily]
  \tikzstyle{sampleEdge} = [font=\sffamily\small,blue]
  \tikzstyle{noSampleEdge} = [node/.style={font=\sffamily\small},red]	
	
  \node[noSample] (anx) [] {$anx$};
  \node[noSample] (gui) [below left of=anx] {$gui$};
  \node[noSample] (str) [below right of=anx] {$str$};
  \node[noSample] (con) [right of=str] {$con$};
  \node[noSample] (sad) [below right of=gui] {$sad$};
  \node[noSample] (irr) [right of=sad] {$irr$};
  \node[noSample] (avo) [right of=irr] {$avo$};  
  \node[noSample] (wor) [left of=sad] {$wor$};

\path[sampleEdge,->,dotted,red]
 (avo) edge [bend left=30] node [xshift=0.2em] {} (anx)
 (avo) edge [bend left=30] node [xshift=0.2em] {} (con)
 (anx) edge [bend left=60] node [xshift=0.2em] {} (avo)
 (anx) edge [bend left=40] node [xshift=0.2em] {} (gui)
 (con) edge [bend left=30] node [xshift=0.2em] {} (avo)
 (con) edge [bend left=30] node [xshift=0.2em] {} (irr)
 (gui) edge [bend left=30] node [xshift=0.2em] {} (anx)
 (irr) edge [bend left=30] node [xshift=0.2em] {} (con);
 %above, yshift=0em, xshift=0.4em

 \path[sampleEdge,->,dotted]
 (sad) edge [bend left] node [xshift=0.2em] {} (gui)
 (anx) edge [bend left] node [xshift=0.2em] {} (con)
 (sad) edge [bend left] node [xshift=0.2em] {} (str)
 (sad) edge [bend left] node [xshift=0.2em] {} (gui)
 (wor) edge [bend right=20] node [xshift=0.2em] {} (irr)
 (str) edge [bend left=30] node [xshift=0.2em] {} (irr)
 (con) edge [bend left=15] node [xshift=0.2em] {} (anx)
 (gui) edge [bend left=30] node [xshift=0.2em] {} (sad)
 (gui) edge [bend left=30] node [xshift=0.2em] {} (str)
 (gui) edge [bend left=30] node [xshift=0.2em] {} (irr)
 (irr) edge [bend left=30] node [xshift=0.2em] {} (wor)
 (irr) edge [bend left=30] node [xshift=0.2em] {} (avo)
 (irr) edge [bend left=30] node [xshift=0.2em] {} (anx)
 (irr) edge [bend left=30] node [xshift=0.2em] {} (str)
 ;
 %above, yshift=0em, xshift=0.4em
    
\end{tikzpicture}
}
&
\vspace{-3em}
\begin{tabular}{l @{\hspace{1em}} r}
variable    &sum effect\\ \midrule
sad         &3.61\\ 
worry       &2.53\\
avoidance   &5.49\\
anxious     &4.33\\
stressed    &2.30\\
concentration &5.05\\
guilty      &3.80\\
irritable   &2.68\\
\midrule
\end{tabular}

\end{tabular}
\\
\centering
    \pgfimage[width=\textwidth]{interv-effect-time-p1}
\caption{patient 1. These results are the same as presented in the main text in Sections \ref{sec:causal-graph}, \ref{sec:nonlinear-dynamic-markov-process} and \ref{sec:interventions-comparisons}. }
\label{fig:intervention-effect-time-p2-app}
\end{figure}
\begin{figure}[t]
\begin{tabular}{c @{\hspace{2em}} p{0.4\textwidth}}
\raisebox{-10em}{
\begin{tikzpicture}[-, >=stealth',shorten >=1pt,auto,node distance=2cm,
  thick,main node/.style={circle,fill=blue!10,draw,font=\footnotesize\sffamily}]
  \tikzstyle{sample} = [circle,fill=blue!20,draw,font=\footnotesize\sffamily]
  \tikzstyle{noSample} = [circle,fill=red!0,draw,font=\footnotesize\sffamily]
  \tikzstyle{sampleEdge} = [font=\sffamily\small,blue]
  \tikzstyle{noSampleEdge} = [node/.style={font=\sffamily\small},red]	
	
  \node[noSample] (anx) [] {$anx$};
  \node[noSample] (gui) [below left of=anx] {$gui$};
  \node[noSample] (str) [below right of=anx] {$str$};
  \node[noSample] (con) [right of=str] {$con$};
  \node[noSample] (sad) [below right of=gui] {$sad$};
  \node[noSample] (irr) [right of=sad] {$irr$};
  \node[noSample] (avo) [right of=irr] {$avo$};  
  \node[noSample] (wor) [left of=sad] {$wor$};

\path[sampleEdge,->,dotted,red]
 (avo) edge [bend left=30] node [xshift=0.2em] {} (anx)
 (avo) edge [bend left=30] node [xshift=0.2em] {} (con)
 (anx) edge [bend left=60] node [xshift=0.2em] {} (avo)
 (anx) edge [bend left=15] node [xshift=0.2em] {} (avo)
;
 %above, yshift=0em, xshift=0.4em

 \path[sampleEdge,->,dotted]
 (sad) edge [bend left] node [xshift=0.2em] {} (anx)
 (sad) edge [bend right=10] node [xshift=0.2em] {} (con)
 (sad) edge [bend left] node [xshift=0.2em] {} (str)
 (wor) edge [bend right=20] node [xshift=0.2em] {} (avo)
 (wor) edge [bend right=20] node [xshift=0.2em] {} (irr)
 (avo) edge [bend right=20] node [xshift=0.2em] {} (wor)
 (avo) edge [bend right=20] node [xshift=0.2em] {} (str)
 (anx) edge [bend left=30] node [xshift=0.2em] {} (sad)(anx) edge [bend left=30] node [xshift=0.2em] {} (gui)
 (str) edge [bend left=30] node [xshift=0.2em] {} (sad)
 (str) edge [bend left=30] node [xshift=0.2em] {} (avo)
 (str) edge [bend left=30] node [xshift=0.2em] {} (anx)
 (str) edge [bend left=30] node [xshift=0.2em] {} (gui)
 (con) edge [bend right=10] node [xshift=0.2em] {} (sad)
 (con) edge [bend right=30] node [xshift=0.2em] {} (anx)
 (con) edge [bend left=30] node [xshift=0.2em] {} (str)
 (con) edge [bend left=30] node [xshift=0.2em] {} (gui)
 (con) edge [bend left=30] node [xshift=0.2em] {} (irr)
 (irr) edge [bend left=30] node [xshift=0.2em] {} (wor)
 (irr) edge [bend left=30] node [xshift=0.2em] {} (con)
 (irr) edge [bend left=30] node [xshift=0.2em] {} (str)
 ;
 %above, yshift=0em, xshift=0.4em
    
\end{tikzpicture}
}
&
\vspace{-3em}
\begin{tabular}{l @{\hspace{1em}} r}
variable    &sum effect\\ \midrule
sad         &1.83\\ 
worry       &2.44\\
avoidance   &3.59\\
anxious     &2.83\\
stressed    &3.03\\
concentration &0.92\\
guilty      &1.54\\
irritable   &1.35\\
\midrule
\end{tabular}

\end{tabular}
\\
\centering
    \pgfimage[width=\textwidth]{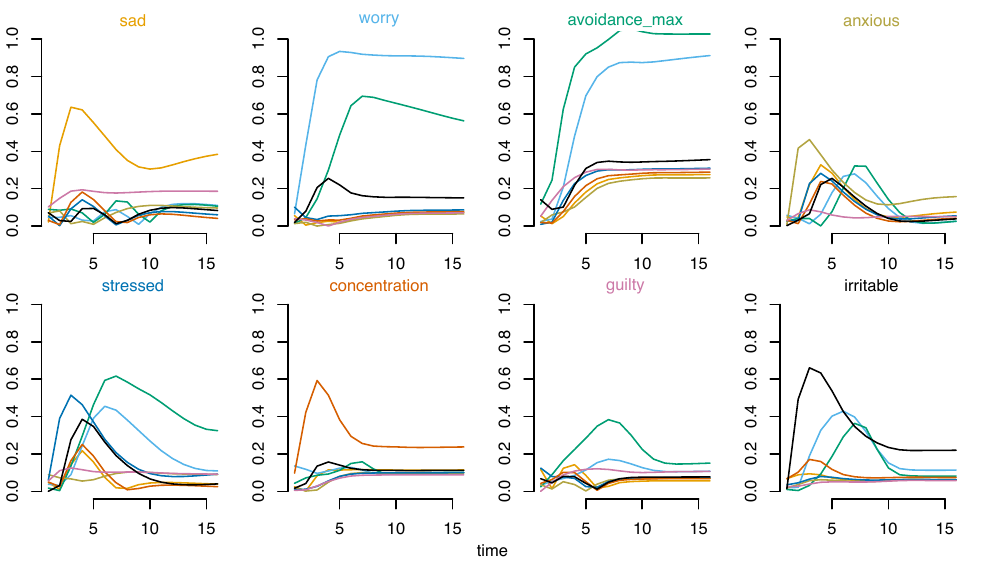}
\caption{ patient 2 }
\label{fig:intervention-effect-time-p2-app}
\end{figure}
\begin{figure}[t]
\begin{tabular}{c @{\hspace{2em}} p{0.4\textwidth}}
\raisebox{-10em}{
\begin{tikzpicture}[-, >=stealth',shorten >=1pt,auto,node distance=2cm,
  thick,main node/.style={circle,fill=blue!10,draw,font=\footnotesize\sffamily}]
  \tikzstyle{sample} = [circle,fill=blue!20,draw,font=\footnotesize\sffamily]
  \tikzstyle{noSample} = [circle,fill=red!0,draw,font=\footnotesize\sffamily]
  \tikzstyle{sampleEdge} = [font=\sffamily\small,blue]
  \tikzstyle{noSampleEdge} = [node/.style={font=\sffamily\small},red]	
	
  \node[noSample] (anx) [] {$anx$};
  \node[noSample] (gui) [below left of=anx] {$gui$};
  \node[noSample] (str) [below right of=anx] {$str$};
  \node[noSample] (con) [right of=str] {$con$};
  \node[noSample] (sad) [below right of=gui] {$sad$};
  \node[noSample] (irr) [right of=sad] {$irr$};
  \node[noSample] (avo) [right of=irr] {$avo$};  
  \node[noSample] (wor) [left of=sad] {$wor$};

\path[sampleEdge,->,dotted,red]
 (sad) edge [bend left=10] node [xshift=0.2em] {} (con)
 (avo) edge [bend left=30] node [xshift=0.2em] {} (con)
 (anx) edge [bend left=30] node [xshift=0.2em] {} (con)
 (str) edge [bend left=30] node [xshift=0.2em] {} (con)
 (con) edge [bend right=40] node [xshift=0.2em] {} (wor)
 (con) edge [bend left=30] node [xshift=0.2em] {} (avo)
 (con) edge [bend left=30] node [xshift=0.2em] {} (str)
 (con) edge [bend left=30] node [xshift=0.2em] {} (gui)
 (gui) edge [bend left=30] node [xshift=0.2em] {} (con)
 (irr) edge [bend left=30] node [xshift=0.2em] {} (con);
 %above, yshift=0em, xshift=0.4em

 \path[sampleEdge,->,dotted]
 (sad) edge [bend left] node [xshift=0.2em] {} (gui)
 (sad) edge [bend left] node [xshift=0.2em] {} (wor)
 (sad) edge [bend right] node [xshift=0.2em] {} (avo)
 (sad) edge [bend left] node [xshift=0.2em] {} (anx)
 (sad) edge [bend left] node [xshift=0.2em] {} (str)
 (sad) edge [bend left] node [xshift=0.2em] {} (gui)
 (wor) edge [bend right=20] node [xshift=0.2em] {} (gui)
 (wor) edge [bend right=20] node [xshift=0.2em] {} (irr)
 (wor) edge [bend right=20] node [xshift=0.2em] {} (irr)
 (avo) edge [bend left=30] node [xshift=0.2em] {} (sad)
 (avo) edge [bend left=30] node [xshift=0.2em] {} (wor)
 (avo) edge [bend left=30] node [xshift=0.2em] {} (str)
 (avo) edge [bend left=30] node [xshift=0.2em] {} (irr)
 (anx) edge [bend left=30] node [xshift=0.2em] {} (sad)
 (anx) edge [bend left=30] node [xshift=0.2em] {} (wor)
 (anx) edge [bend left=40] node [xshift=0.5em] {} (irr)
 (str) edge [bend left=30] node [xshift=0.2em] {} (sad)
 (str) edge [bend left=30] node [xshift=0.2em] {} (avo)
 (con) edge [bend left=15] node [xshift=0.2em] {} (anx)
 (gui) edge [bend left=30] node [xshift=0.2em] {} (sad)
 (gui) edge [bend left=30] node [xshift=0.2em] {} (wor)
 (gui) edge [bend left=30] node [xshift=0.2em] {} (str)
 (gui) edge [bend left=30] node [xshift=0.2em] {} (irr)
 (irr) edge [bend left=30] node [xshift=0.2em] {} (wor)
 (irr) edge [bend left=10] node [xshift=0.2em] {} (gui)
 ;
 %above, yshift=0em, xshift=0.4em
    
\end{tikzpicture}
}
&
\vspace{-3em}
\begin{tabular}{l @{\hspace{1em}} r}
variable    &sum effect\\ \midrule
sad         &3.00\\ 
worry       &2.63\\
avoidance   &3.40\\
anxious     &2.02\\
stressed    &4.67\\
concentration &6.66\\
guilty      &3.51\\
irritable   &2.61\\
\midrule
\end{tabular}
\end{tabular}
\\
\centering
    \pgfimage[width=\textwidth]{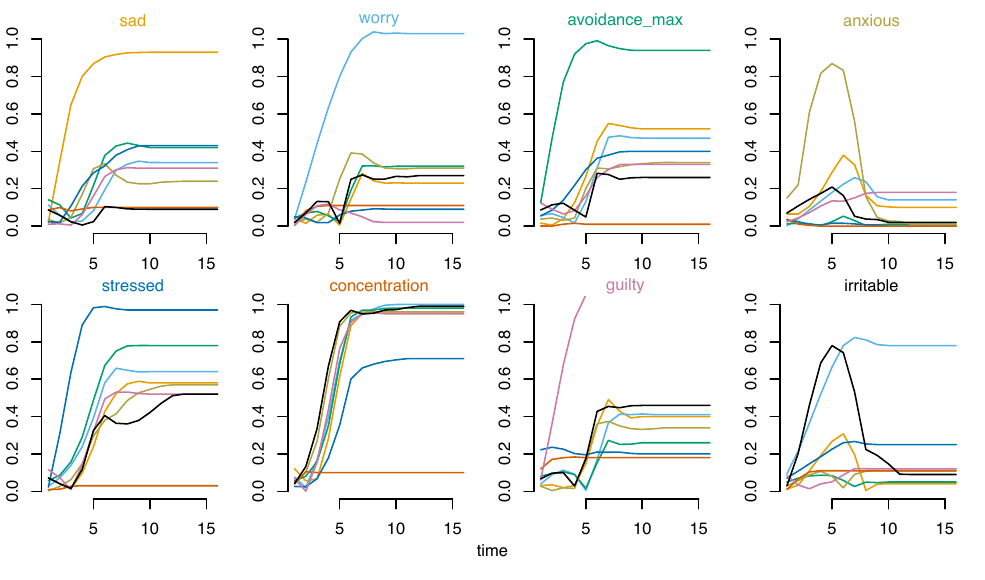}
\caption{ patient 3. Where we used the settings $\alpha=2$, $\gamma=09.1$, and for the causal graph $\alpha=0.1$}
\label{fig:intervention-effect-time-p3}
\end{figure}
\begin{figure}[t]
\begin{tabular}{c @{\hspace{2em}} p{0.4\textwidth}}
\raisebox{-10em}{
\begin{tikzpicture}[-, >=stealth',shorten >=1pt,auto,node distance=2cm,
  thick,main node/.style={circle,fill=blue!10,draw,font=\footnotesize\sffamily}]
  \tikzstyle{sample} = [circle,fill=blue!20,draw,font=\footnotesize\sffamily]
  \tikzstyle{noSample} = [circle,fill=red!0,draw,font=\footnotesize\sffamily]
  \tikzstyle{sampleEdge} = [font=\sffamily\small,blue]
  \tikzstyle{noSampleEdge} = [node/.style={font=\sffamily\small},red]	
	
  \node[noSample] (anx) [] {$anx$};
  \node[noSample] (gui) [below left of=anx] {$gui$};
  \node[noSample] (str) [below right of=anx] {$str$};
  \node[noSample] (con) [right of=str] {$con$};
  \node[noSample] (sad) [below right of=gui] {$sad$};
  \node[noSample] (irr) [right of=sad] {$irr$};
  \node[noSample] (avo) [right of=irr] {$avo$};  
  \node[noSample] (wor) [left of=sad] {$wor$};

\path[sampleEdge,->,dotted,red]
 (con) edge [bend left=15] node [xshift=0.2em] {} (anx)
 (con) edge [bend left=30] node [xshift=0.2em] {} (gui)
 (con) edge [bend left=15] node [xshift=0.2em] {} (irr)
 (irr) edge [bend left=10] node [xshift=0.2em] {} (con)
;
 %above, yshift=0em, xshift=0.4em

 \path[sampleEdge,->,dotted]
 (sad) edge [bend left] node [xshift=0.2em] {} (wor)
 (sad) edge [bend left=10] node [xshift=0.2em] {} (con)
 (wor) edge [bend left=20] node [xshift=0.2em] {} (sad)
 (avo) edge [bend left=30] node [xshift=0.2em] {} (con)
 (str) edge [bend left=30] node [xshift=0.2em] {} (anx)
 (str) edge [bend left=30] node [xshift=0.2em] {} (irr)
 (irr) edge [bend left=30] node [xshift=0.2em] {} (str)
 ;
 %above, yshift=0em, xshift=0.4em
    
\end{tikzpicture}
}
&
\vspace{-3em}
\begin{tabular}{l @{\hspace{1em}} r}
variable    &sum effect\\ \midrule
sad         &2.43\\ 
worry       &2.72\\
avoidance   &3.20\\
anxious     &2.33\\
stressed    &2.10\\
concentration &5.83\\
guilty      &2.37\\
irritable   &3.26\\
\midrule
\end{tabular}
\end{tabular}
\\
\centering
    \pgfimage[width=\textwidth]{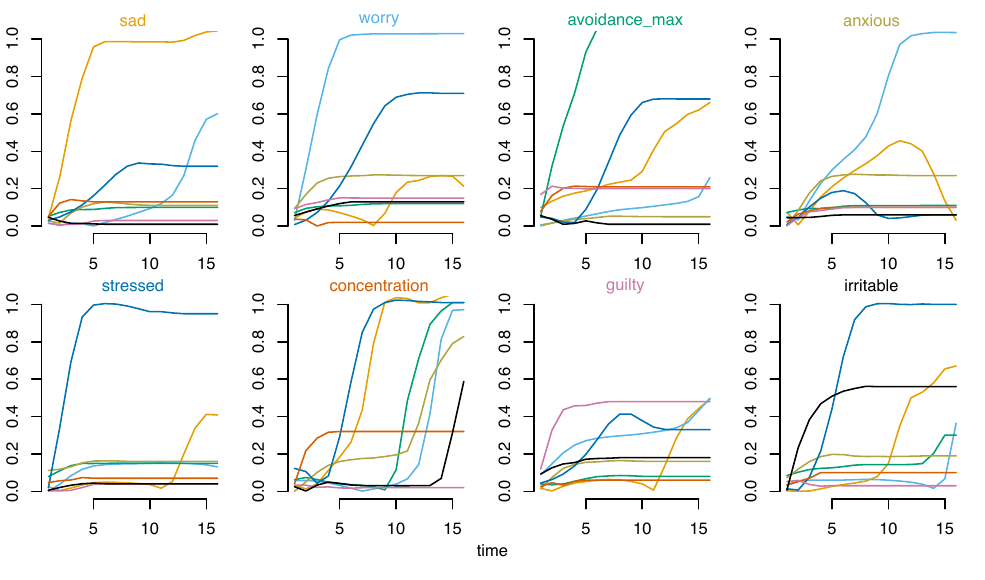}
\caption{ patient 4 }
\label{fig:intervention-effect-time-p4}
\end{figure}
\begin{figure}[t]
\begin{tabular}{c @{\hspace{2em}} p{0.4\textwidth}}
\raisebox{-10em}{
\begin{tikzpicture}[-, >=stealth',shorten >=1pt,auto,node distance=2cm,
  thick,main node/.style={circle,fill=blue!10,draw,font=\footnotesize\sffamily}]
  \tikzstyle{sample} = [circle,fill=blue!20,draw,font=\footnotesize\sffamily]
  \tikzstyle{noSample} = [circle,fill=red!0,draw,font=\footnotesize\sffamily]
  \tikzstyle{sampleEdge} = [font=\sffamily\small,blue]
  \tikzstyle{noSampleEdge} = [node/.style={font=\sffamily\small},red]	
	
  \node[noSample] (anx) [] {$anx$};
  \node[noSample] (gui) [below left of=anx] {$gui$};
  \node[noSample] (str) [below right of=anx] {$str$};
  \node[noSample] (con) [right of=str] {$con$};
  \node[noSample] (sad) [below right of=gui] {$sad$};
  \node[noSample] (irr) [right of=sad] {$irr$};
  \node[noSample] (avo) [right of=irr] {$avo$};  
  \node[noSample] (wor) [left of=sad] {$wor$};

\path[sampleEdge,->,dotted,red]
 (con) edge [bend left=15] node [xshift=0.2em] {} (anx)
 (str) edge [bend left=30] node [xshift=0.2em] {} (con)
 (con) edge [bend left=30] node [xshift=0.2em] {} (avo)
 (con) edge [bend left=30] node [xshift=0.2em] {} (str)
 ;
 %above, yshift=0em, xshift=0.4em

 \path[sampleEdge,->,dotted]
 (sad) edge [bend right] node [xshift=0.2em] {} (irr)
 (wor) edge [bend left=20] node [xshift=0.2em] {} (irr)
 (avo) edge [bend left=20] node [xshift=0.2em] {} (wor)
 (avo) edge [bend left=30] node [xshift=0.2em] {} (str)
 (anx) edge [bend left=30] node [xshift=0.2em] {} (wor)
 (anx) edge [bend right=30] node [xshift=0.2em] {} (str)
 (anx) edge [bend left=30] node [xshift=0.2em] {} (irr)
 (str) edge [bend left=30] node [xshift=0.2em] {} (anx)
 (gui) edge [bend left=30] node [xshift=0.2em] {} (wor)
 (gui) edge [bend left=30] node [xshift=0.2em] {} (irr)
 (irr) edge [bend left=30] node [xshift=0.2em] {} (sad)
 (irr) edge [bend left=30] node [xshift=0.2em] {} (anx)
 (irr) edge [bend left=30] node [xshift=0.2em] {} (gui)
 ;
 %above, yshift=0em, xshift=0.4em
    
\end{tikzpicture}
}
&
\vspace{-3em}
\begin{tabular}{l @{\hspace{1em}} r}
variable    &sum effect\\ \midrule
sad         &1.12\\ 
worry       &1.41\\
avoidance   &0.73\\
anxious     &0.86\\
stressed    &1.24\\
concentration &1.09\\
guilty      &2.26\\
irritable   &2.97\\
\midrule
\end{tabular}
\end{tabular}
\\
\centering
    \pgfimage[width=\textwidth]{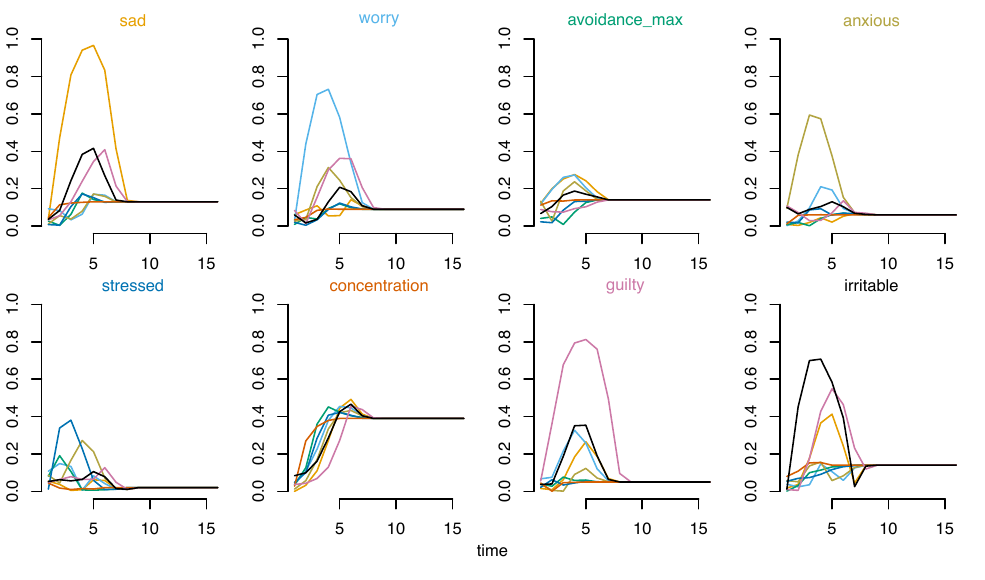}
\caption{ patient 5 }
\label{fig:intervention-effect-time-p5}
\end{figure}
%

%---------------------------------------------------------------------------
\end{document}